\documentclass[11pt,a4paper]{article}
\usepackage[margin=1in]{geometry}
\usepackage{amsmath,amssymb,amsthm,mathtools}
\usepackage[expansion=false]{microtype}
\usepackage{booktabs}
\usepackage{needspace}
\usepackage{enumitem}
\usepackage{xcolor}
\usepackage{graphicx}
\usepackage{algorithm}
\usepackage[noend]{algpseudocode}
\definecolor{linkcolor}{rgb}{0.05,0.20,0.50}
\usepackage[colorlinks=true,linkcolor=linkcolor,citecolor=linkcolor,urlcolor=linkcolor]{hyperref}

\theoremstyle{plain}
\newtheorem{theorem}{Theorem}[section]
\newtheorem{proposition}[theorem]{Proposition}

\newtheorem{corollary}[theorem]{Corollary}
\theoremstyle{definition}
\newtheorem{definition}[theorem]{Definition}
\newtheorem{remark}[theorem]{Remark}

\newcommand{\Pset}{\mathcal{P}}
\newcommand{\Vsp}{\mathcal{V}}
\newcommand{\Zsp}{\mathcal{Z}}
\newcommand{\Esp}{\mathcal{E}}
\newcommand{\Cmeas}{\mathcal{C}}
\newcommand{\WW}{\mathsf{W}}
\newcommand{\Mplus}{\mathcal{M}_{+}}
\newcommand{\occ}{m}
\newcommand{\sig}{\sigma_{\Pset}}
\newcommand{\Pihat}{\widehat{\Pi}}
\newcommand{\Rint}{\mathsf{R}}
\newcommand{\infconv}{\,\square\,}
\newcommand{\PZ}{\mathfrak{P}(\Zsp)}

\title{\bfseries Constitutive-Set Mechanics:\\[2pt] variational mechanics on an admissible set of constitutive laws}
\author{%
Altaf Ahmad Lone\textsuperscript{a}, \
R.~T. Durai Prabhakaran\textsuperscript{a}, \
Tawqeer Nasir Tak\textsuperscript{a},\\[2pt]
Timon Rabczuk\textsuperscript{b,c,$*$}, \
Xiaoying Zhuang\textsuperscript{d}\\[8pt]
\begin{minipage}{0.92\textwidth}\small\itshape
\textsuperscript{a}Department of Mechanical Engineering, Indian Institute of
Technology Jammu, Jagti, NH-44, Jammu, 181221, India\\
\textsuperscript{b}Institute of Computational Mechanics and Artificial
Intelligence, Fudan University, Shanghai, 200433, China\\
\textsuperscript{c}Institute of Structural Mechanics, Bauhaus-Universit\"at
Weimar, Marienstra\ss e 15, Weimar, 99423, Germany\\
\textsuperscript{d}Chair of Computational Science and Simulation Technology,
Institute of Photonics, Leibniz University Hannover, Appelstra\ss e 11A,
Hannover, 30167, Germany\\[4pt]
\upshape\textsuperscript{$*$}Corresponding author.\ \textit{E-mail address:}
\texttt{timon.rabczuk@uni-weimar.de} (T.~Rabczuk).
\end{minipage}}
\date{}

\begin{document}
\maketitle
\vspace{-3em}

\begin{abstract}
\noindent
A structural simulation needs a constitutive law, and experiments rarely
determine one uniquely: several laws may fit the same data and satisfy the same
physical constraints, yet differ where the structure is loaded in ways the
tests never were. Constitutive-Set Mechanics (CSM) keeps all of those laws.
It replaces the single law in the variational formulation by the admissible
set, and lets the mechanics itself decide which members of the set affect the
structural prediction. The framework rests on one observation: a finite
element assembly consults the law only at the strains the structure reaches,
so the incremental energy depends on the law through a weighted record of
those strains, the occupation measure of the state. The energy of a state
under the set is a support function on that measure, and the equilibrium
solve one performs anyway supplies the information needed to bound the worst
case. Equilibrium states become certificates on the robust response, rigorous
on the upper side for any mechanics solver and two-sided under global
minimisation; the certificate has at most one more state than the number of
constitutive directions the structure interrogates; a coherence theorem
identifies when the pointwise-worst material is one no single material can
be; and constitutive inference acts on the same support function. The method
is demonstrated on linear elasticity, a data-constrained constitutive
function, phase-field fracture, an assembled finite-strain composite shell,
and a membrane characterised by one-mode tests alone. On the shell the
structure interrogates four of $226$ constitutive directions; on the membrane
the certified worst-case energy exceeds the reference more than fourfold, and
the experiment that most contracts the certified prediction is not the one
where the constitutive uncertainty is largest.
\end{abstract}

\section{Introduction}\label{sec:intro}

A structural simulation needs a constitutive law. The geometry, the loads,
the kinematics and the discretisation are fixed by the problem; the material
is not, and it has to be supplied. The usual route is calibration: experiments
are performed, a model form is chosen, its parameters are fitted, and the
resulting single law is used in the structural model
\cite{Ogden1972,Fuhg2025}. Where the structure operates near the states the
experiments probed, this works well.

The difficulty is that experiments provide limited information, and several
constitutive laws may be consistent with the same measurements. Suppose
tensile tests characterise a material accurately along each of its axes, but
the structure later undergoes coupled biaxial deformation. Several laws will
agree almost perfectly with the tensile data and differ strongly under the
coupled loading. This is the normal situation for a nonlinear, anisotropic,
multi-mechanism or degrading material: the feasible experiments are largely
restricted to nearly homogeneous load cases, while the structure visits a
coupled multiaxial region that no test reaches \cite{Fuhg2025}. The evidence
then determines not one law but a set of laws, each reproducing the
measurements within their tolerances and satisfying the physics of the
material class. Calibration selects one member of that set because a fitting
procedure happened to select it, and the structural prediction inherits the
selection without any record of the alternatives that were equally plausible.

Constitutive-Set Mechanics (CSM), the framework proposed here, keeps the
admissible set instead of selecting from it, and it asks the mechanics itself
which of the alternatives affect the structural prediction. The aim is not
only to obtain a range of possible responses. It is to determine which part
of the constitutive uncertainty is actually visible to the structure, to
certify what the structure can conclude in spite of it, and to identify which
experiment would reduce the uncertainty that matters for that structure. The
remainder of this introduction places the idea against the work it builds on,
and then states it.

Carrying a set rather than a point is familiar in computational mechanics.
Convex models of uncertainty represent scatter in material properties by a
convex set instead of a distribution \cite{BenHaim1990,Elishakoff1994}, and
anti-optimisation computes the extremes of a structural response over such a
set \cite{Qiu1998}. Non-probabilistic finite element analysis propagates
interval and fuzzy parameter descriptions \cite{Moens2005,Faes2020};
uncertainty quantification and sensitivity analysis measure the influence of
uncertain material inputs, including for phase-field fracture
\cite{Hamdia2015,VuBac2016}; nonparametric probabilistic approaches randomise
the operator itself to represent model-form error \cite{Soize2005}; and
set-membership identification constructs the feasible parameter set consistent
with bounded-error data \cite{Milanese1991}. Robust optimisation computes the
worst case over an uncertainty set by convex duality, worst-case compliance
being the canonical structural instance
\cite{BenTal1997,BenTal1998,Kanno2006,BenTal2009,Bertsimas2011}. Optimal
uncertainty quantification goes furthest, extremising a quantity of interest
over admissible pairs of response functions and probability measures consistent
with a stated information set, and reducing the resulting infinite-dimensional
programmes through the moment structure of that information \cite{Owhadi2013}.
In each of these the object carried is a coefficient vector of a declared model
form or an input--output response map, and the constitutive function that the
variational principle consults is not itself the uncertain input.

Two further lines dispense with calibration, in opposite directions.
Data-driven computational mechanics discards the constitutive equation and
works directly with a material data set
\cite{Kirchdoerfer2016,Kirchdoerfer2017,Conti2018}, with extensions to
inelasticity \cite{Eggersmann2019} and fracture \cite{Carrara2020}, and with
inference performed without a model and without a prior
\cite{Conti2023,Prume2023}. Learned constitutive modelling keeps the
constitutive function and represents it by a network or by a discovered
expression, imposing thermodynamic admissibility, convexity, monotonicity and
material symmetry by construction
\cite{Ghaboussi1991,Amos2017,Linka2021,Masi2021,Klein2022} or discovering the
model form from full-field data \cite{Flaschel2021,Flaschel2022,Linka2023};
networks also serve to approximate the variational solution rather than the
material \cite{Samaniego2020,NguyenThanh2020}. Neither line is confined to
point estimates, since Bayesian formulations return posterior distributions
over constitutive models with quantified uncertainty \cite{Joshi2022}. What
each delivers is a constitutive object for the structure to use, and the
structural consequence follows from using it.

Which measurement to perform next is the subject of optimal experimental
design, classically through information-theoretic utilities on the model
parameters \cite{Lindley1956,Chaloner1995,Huan2013} and, in goal-oriented form,
through utilities on a prediction quantity of interest \cite{Attia2018}, with
active learning in the same role for material representations
\cite{Lookman2019}. These are expectations taken under a prior.

The question taken up here is what the variational mechanics should itself do
when its constitutive input is a set. CSM answers with one replacement in the
formulation,
\[
  \psi\;\longrightarrow\;\Pset ,
\]
the single law giving way to the admissible set. In words: every law that
thermodynamics allows, that agrees with the measurements within their
tolerances, that has the regularity a tangent-stiffness test has certified,
and that respects the known structure of the material, is kept
(Section~\ref{sec:rep}). No element of $\Pset$ is ever selected, and an empty
intersection is not a failure of the method but a refutation of the
constitutive class, a finding that the evidence and the assumed physics are
inconsistent. The replacement does not care how the laws are written down: a
parametric formula, a spline, a Gaussian-process posterior and a neural
network all describe subsets of the space of constitutive functions, and the
method acts on the subset. Nonconvex families enter through their convex hull,
at the cost of a conservative upper certificate and no loss at the lower
endpoint (Proposition~\ref{prop:conv}).

The replacement is workable because of one fact about finite element
assembly that every practitioner knows and rarely uses. A structure does not
consult the constitutive law everywhere; in a given state it evaluates the
law only at the strains reached at its quadrature points, each with a weight.
The record of those strains and weights is the strain-occupation measure
$\occ(z)$ of the state, and the incremental energy is \emph{linear in the
constitutive function} through it: $\Pi(z;\psi)=\int\psi\,d\occ(z)+c(z)$, with
$c(z)$ collecting everything that does not depend on the material. The worst
energy any admissible law can give the state is then the largest value of a
linear functional over a convex set, which convex analysis calls the support
function of $\Pset$, here evaluated on the occupation measure:
$\Pihat(z;\Pset)=\sig(\occ(z))+c(z)$ (Section~\ref{sec:seteng}). One
consequence carries the method. Solving the equilibrium problem for one
candidate material gives, through the occupation measure of the resulting
state, the tangent plane to the worst-case problem at that material. The
equilibrium solves one performs anyway are therefore the oracle of a concave
maximisation over $\Pset$: no a priori sampling or enumeration of the
constitutive space is required, and the mechanics is not reformulated. The convex objects involved are classical, being support
functions and concave value functions \cite{Rockafellar1970,HUL1993}, Sion's
minimax theorem \cite{Sion1958}, and outer linearisation in the sense of Kelley
\cite{Kelley1960}, whose failure to terminate finitely on polytopes is
inherited here.

Three consequences follow, developed in Section~\ref{sec:ccc}. Equilibrium
states become the
supporting cuts of a certificate that is rigorous on its upper side for any
solver and two-sided under global minimisation, whose size is set by the
number of constitutive directions the mechanics interrogates. The conservatism of
pointwise worst-case practice is governed by whether the pointwise envelope is
itself an admissible law; closure of $\Pset$ under pointwise maximum suffices
for that, and some admissibility conditions preserve the closure while others
destroy it. And a measurement acts on the same support function by
inf-convolution, with a guarantee that survives only on the width of the
response interval. The first reduces to classical worst-case compliance for
linear parametric elasticity, as Section~\ref{sec:ex1} exhibits. The second has
analogues in the decomposition of constraint-wise uncertainty in robust
optimisation \cite{BenTal1998} and in the overestimation of coupled ranges by
independent interval evaluation \cite{Moens2005}; Section~\ref{sec:coh} gives
its form for sets of constitutive functions.

The method is stated in Sections~\ref{sec:csm} and~\ref{sec:ccc} and instantiated
on problems of increasing mechanical richness, chosen so that the
constitutive-set layer is tested rather than the mechanics reinvented: linear
elasticity (Section~\ref{sec:ex1}); a constitutive function constrained by
energy and tangent-stiffness data, where the certified regularity makes the
pointwise-worst material inadmissible (Section~\ref{sec:ex2}); isotropic
phase-field fracture \cite{Francfort1998,Bourdin2000,Miehe2010,Ambati2015},
carrying that same set through a nonconvex state space with an irreversible
internal variable (Section~\ref{sec:s3}); an assembled finite-strain composite
shell with mode-resolved degradation
\cite{Amiri2014,Areias2016,Bleyer2018,Bui2021}, where the occupation measure is
a by-product of the production assembly and analytic, spline and learned
representations enter through one socket (Section~\ref{sec:s4}); and a membrane
characterised only by one-mode tests, where the certified worst-case energy
exceeds the reference more than fourfold and a well-fitting network lies
outside the interval the evidence supports (Section~\ref{sec:s5}). CSM certifies value-level quantities of a declared
incremental problem: it does not certify solution fields, crack paths, or the
global optimality of a sequential trajectory, and its path-level statements
hold for trajectory classes independent of the constitutive law.

\subsection{Contributions}\label{sec:contrib}

\begin{enumerate}[label=(\roman*),leftmargin=2em,itemsep=1pt,topsep=2pt,parsep=0pt]
\item \emph{A variational formulation that carries an admissible set of
      constitutive laws rather than selecting one,} through equilibrium,
      degradation and loading history, without modifying the mechanics. It
      rests on one identification: the energy of a state under the set is the
      support function of $\Pset$ on the occupation measure of the state, and
      the occupation measure of an equilibrium state is a supergradient of the
      value function, so the finite element solve is the oracle and the
      supporting functional is a by-product of assembly.
\item \emph{A mechanical certificate generated directly by equilibrium
      states:} a bracket on the worst-case response whose dual certificate
      has at most $n_{\rm vis}+1$ states, $n_{\rm vis}$ being the number of
      constitutive directions the structure interrogates, rather than a size
      set by the discretisation. This bounds the number of states in the
      certificate, not the number of equilibrium solves the adaptive procedure
      needs to construct it; the two are governed by different quantities and
      are kept apart throughout.
\item \emph{A clear distinction between what can be certified with any solver
      and what requires global minimisation:} the upper bound is rigorous for
      any equilibrium solver, the lower bound only under global minimisation,
      and an inversion of the two is a certificate that a local solver has
      returned a non-global state.
\item \emph{A coherence measure that quantifies when pointwise worst-case
      constitutive behaviour cannot be produced by one physical material:}
      the gap is zero exactly when the pointwise envelope is itself an
      admissible law, for which closure of $\Pset$ under pointwise maximum is
      sufficient, together with a classification of which admissibility
      conditions preserve that closure and which couple the law across
      strains.
\item \emph{A mechanics-driven procedure for identifying which experiment
      reduces structural uncertainty:} a measurement acts on the same support
      function that gives the worst case, the worst-case value itself cannot
      be guaranteed to fall, the width of the response interval can, and the
      constitutive uncertainty with the largest structural consequence is
      identified by the mechanics and need not be the largest uncertainty in
      the material description.
\end{enumerate}

\section{Constitutive-Set Mechanics}\label{sec:csm}

This section states the framework. The idea can be given in one paragraph.
A finite element computation minimises an energy over admissible states for a
given material law. That energy depends on the law only through the strains
the structure actually reaches, and it depends on the law \emph{linearly}
through them: the assembly sums the energy density over quadrature points with
fixed nonnegative weights. If the law is replaced by a set of candidate laws,
the worst energy any of them can give a state is therefore a well-understood
object from convex analysis, and the equilibrium solve for one candidate law
already delivers the information needed to bound the worst case over all of
them. The rest of the section makes this precise: the setting, the
representation of constitutive knowledge as a set, the record of visited
strains (the occupation measure), the energy of a state under the set, and the
duality that follows.

One point is worth seeing on the smallest possible instance before the
formalism, because it is the reason a support function is needed at all. Let
the admissible set be the segment between two laws,
$\Pset=\mathrm{conv}\{\psi_1,\psi_2\}$, and let a state consult the law at two
strains with weights $w_a,w_b>0$, so that its occupation measure is
$\occ=w_a\delta_{E_a}+w_b\delta_{E_b}$. The energy of the state under a law
$\psi=\lambda\psi_1+(1-\lambda)\psi_2$ is linear in $\lambda$, so its maximum
over the segment sits at an endpoint:
\[
  \sig(\occ)=\max\bigl\{\,w_a\psi_1(E_a)+w_b\psi_1(E_b),\;
                          w_a\psi_2(E_a)+w_b\psi_2(E_b)\,\bigr\}.
\]
One admissible law serves both strains. The pointwise envelope
$\bar\psi(E)=\max\{\psi_1(E),\psi_2(E)\}$ is free to take $\psi_1$ at $E_a$ and
$\psi_2$ at $E_b$, and if that is where each is larger,
\[
  \int\bar\psi\,d\occ=w_a\psi_1(E_a)+w_b\psi_2(E_b)\;>\;\sig(\occ),
\]
strictly, because each coherent value on the right is smaller in one of its two
terms. The envelope is then not a member of $\Pset$ --- it is not on the
segment --- and the pointwise-worst energy is one that no admissible material
can produce. This is not a theorem or a study; it is the whole of the
coherence question of Section~\ref{sec:coh} on two laws and two strains, and
the reason the energy of a state under a set is defined below through
$\sig$ rather than through $\bar\psi$.

\subsection{Setting}

Variational mechanics selects a state by minimising an incremental functional
$\Pi(z;\psi)$ over admissible states $z\in\Zsp$ for a given constitutive law
$\psi$. The state $z$ collects whatever the problem evolves --- displacements,
and in fracture or plasticity also the internal variables --- and $\Zsp$ is the
set of states the problem allows at the current increment. CSM keeps this
mechanics unchanged and replaces the law by a set: the constitutive input is
$\Pset\subset\Vsp$, a convex compact subset of a space of constitutive
functions on a strain space $\Esp$, and no element of $\Pset$ is ever
selected. Three layers are logically independent and are kept so: the
\emph{mechanics} (any variational problem, with its solver, which we call the
oracle because it answers the question ``what state does material $\psi$
produce?'', $\psi\mapsto z^{\star}(\psi)$), the \emph{constitutive knowledge}
(the set $\Pset$, however constructed), and the \emph{information} (how the
mechanics decides what the knowledge must explain next).
Figure~\ref{fig:workflow} places the objects of the paper on these three
layers, and Table~\ref{tab:notation} lists the recurring ones.

\begin{figure}[t]
\centering
\includegraphics[width=\textwidth]{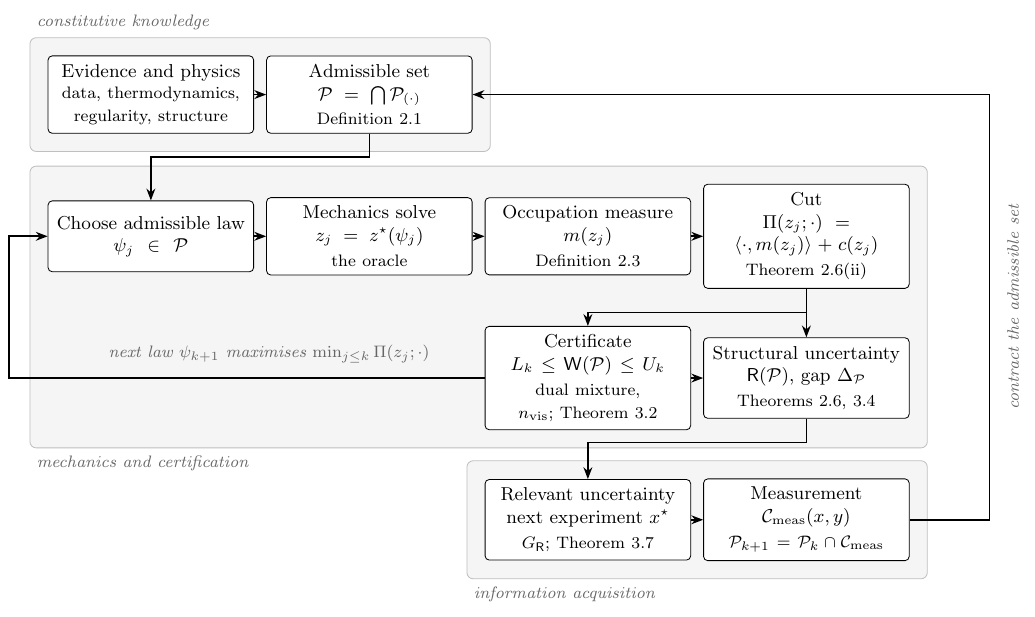}
\caption{The CSM workflow. Evidence and physics define the admissible set
$\Pset$; a mechanics solve at an admissible law returns a state, whose
occupation measure $\occ(z_j)$ supplies a cut on the worst-case problem;
the cuts form a certificate on the robust value, and the response interval is
the structural uncertainty. Two loops close the diagram: the witness loop
chooses the next law from the current cuts, and the information loop
contracts $\Pset$ by a measurement. Every box is an object or an operation
defined in Sections~\ref{sec:csm} and~\ref{sec:ccc}.}
\label{fig:workflow}
\end{figure}

\begin{table}[t]\centering\small
\caption{Recurring objects.}
\label{tab:notation}
\begin{tabular}{@{}l p{0.70\textwidth}@{}}
\toprule
$\Pset\subset\Vsp$ & admissible constitutive set (Definition~\ref{def:P}) \\
$z\in\Zsp$; $z^{\star}(\psi)$ & mechanical state; the state the oracle returns for the law $\psi$ \\
$\occ(z)$ & occupation measure of the state, the weighted record of its visited strains (Definition~\ref{def:occ}) \\
$\Pi(z;\psi)=\int\psi\,d\occ(z)+c(z)$ & incremental energy, affine in $\psi$ \\
$\sig(\occ)$ & support function of $\Pset$ on $\occ$; $\Pihat(z;\Pset)=\sig(\occ(z))+c(z)$ is the set energy \\
$f(\psi)=\inf_z\Pi(z;\psi)$ & mechanical value function, concave in $\psi$ \\
$\WW(\Pset)=\sup_\Pset f$; $\Rint(\Pset)$ & robust value; response interval $[\inf_\Pset f,\WW(\Pset)]$ \\
$L_k,\;U_k$ & lower and upper certificate bounds from $k$ witness states \\
$n_{\rm raw}$; $n_{\rm vis}$ & dimension of the constitutive parameterisation; dimension the mechanics can see \\
$\bar\psi$; $\Delta_\Pset(z)$ & pointwise envelope of $\Pset$; coherence gap of the state \\
$\Cmeas(x,y)$; $G_{\Rint}(x)$ & laws consistent with outcome $y$ of measurement $x$; guaranteed contraction \\
\bottomrule
\end{tabular}
\end{table} The theory below is indifferent to how $\Pset$ is parameterised and to
how the mechanics is discretised; both enter only through the objects of
Section~\ref{sec:seteng}. The paper is methodological: the layer is stated
once, in Sections~\ref{sec:csm} and~\ref{sec:ccc}, and the five studies exercise
it across mechanics of increasing complexity without modifying it.

\subsection{Representation of constitutive knowledge}\label{sec:rep}

What is known about a material comes from four sources, and each of them
excludes some candidate laws rather than selecting one. Thermodynamics
excludes laws that are not stress-free at zero strain, or that are not convex
where convexity is required. Experiments exclude laws that disagree with the
measurements by more than the measurement tolerance. Regularity excludes laws
with unphysically sharp curvature, when a tangent-stiffness test has bounded
the curvature. Material structure excludes laws that violate a known symmetry
or a known relation among moduli. The set of laws that survive all four
exclusions is the constitutive knowledge, and it is what CSM carries.

\begin{definition}[Constitutive knowledge set]\label{def:P}
$\Pset\subset\Vsp$ is the intersection
\[
  \Pset\;=\;\Pset_{\mathrm{thermo}}\cap\Pset_{\mathrm{exp}}\cap\Pset_{\mathrm{reg}}\cap\Pset_{\mathrm{struct}}
\]
of the thermodynamically admissible laws (positivity, normalisation, convexity
or polyconvexity where required, monotonicity of the stress where required),
the laws consistent with experimental data within stated tolerances, the laws
satisfying regularity conditions for the material class (Lipschitz or curvature
bounds on the law or its derivatives), and the laws respecting known material
structure (symmetry, identities among moduli, mixture rules). $\Pset$ is assumed
nonempty, convex and compact in a topology in which the pairing of
Section~\ref{sec:seteng} is continuous. Emptiness of $\Pset$ is refutation of
the constitutive class and is an output of the method, not a failure of it.
\end{definition}

Each component of Definition~\ref{def:P} is a constraint on the constitutive
\emph{function}, and each is either linear or convex in $\psi$; no assumption is
made about how $\psi$ is represented. A parametric law, a spline, a symbolic
form, or a neural network are parameterisations of subsets of $\Vsp$; the
method acts on the subset.

\begin{proposition}[Nonconvex parameterisations]\label{prop:conv}
Let $\Pset_0\subset\Vsp$ be compact but not convex, and let $f$ be any concave
function on $\mathrm{conv}\,\Pset_0$; the case of interest is the mechanical
value function of Section~\ref{sec:value}. Then
\[
  \sup_{\mathrm{conv}\,\Pset_0}f\;\ge\;\sup_{\Pset_0}f,
  \qquad
  \inf_{\mathrm{conv}\,\Pset_0}f\;=\;\inf_{\Pset_0}f .
\]
\end{proposition}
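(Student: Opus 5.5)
The first inequality is immediate from inclusion: $\Pset_0\subset\mathrm{conv}\,\Pset_0$, so any supremum over the smaller set is dominated by the supremum over the larger one. No concavity is used there, and I will note this explicitly. It is also the reason the upper certificate is only conservative for nonconvex families.

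For the equality of infima, the inequality $\inf_{\mathrm{conv}\,\Pset_0}f\le\inf_{\Pset_0}f$ again follows from inclusion. For the reverse, I will take an arbitrary $\psi\in\mathrm{conv}\,\Pset_0$. By definition of the convex hull as the set of finite convex combinations, it can be written as $\psi=\sum_{i=1}^k\lambda_i\psi_i$ with $\psi_i\in\Pset_0$, $\lambda_i\ge 0$ and $\sum_i\lambda_i=1$. Concavity of $f$, extended to finite combinations by induction on $k$, gives
\[
  f(\psi)\;\ge\;\sum_{i=1}^k\lambda_i f(\psi_i)\;\ge\;\min_i f(\psi_i)\;\ge\;\inf_{\Pset_0}f .
\]
Taking the infimum over $\psi$ yields $\inf_{\mathrm{conv}\,\Pset_0}f\ge\inf_{\Pset_0}f$, which closes the equality.

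The only point needing care is what "$\mathrm{conv}$" means. If it denotes the algebraic convex hull, the argument above is complete; the combination is finite, so no continuity of $f$ is required. Compactness of $\Pset_0$ is then not needed for the stated identities. It matters only for attainment, or for replacing $\mathrm{conv}$ by its closure, the object one would want for compactness in Definition~\ref{def:P}. If the closed convex hull is intended, I would add upper semicontinuity of $f$ in the topology of Definition~\ref{def:P}. Then a limit $\psi=\lim\psi^{(n)}$ of finite combinations satisfies $f(\psi)\ge\limsup_n f(\psi^{(n)})\ge\inf_{\Pset_0}f$. The mechanical value function $f(\psi)=\inf_z\Pi(z;\psi)$ is an infimum of continuous affine maps $\psi\mapsto\int\psi\,d\occ(z)+c(z)$, so it is upper semicontinuous. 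This extension is the step I expect to be the main (mild) obstacle, and I would state it as a remark rather than burden the proposition.
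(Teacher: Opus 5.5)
Your proof is correct. For the equality of infima it takes a different route from the paper.

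The paper argues through extreme points: a concave function attains its infimum over a compact convex set at an extreme point, and every extreme point of $\mathrm{conv}\,\Pset_0$ lies in $\Pset_0$. You instead write an arbitrary point of the hull as a finite convex combination of members of $\Pset_0$ and apply the finite Jensen inequality for concave functions.

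Your argument is more elementary and more general. For the algebraic hull it uses neither compactness of $\Pset_0$ nor any semicontinuity of $f$, and it never needs the infimum to be attained. The paper's route tacitly needs the hull to be compact and the infimum to be attained. That is a Bauer-type minimum principle, which calls for lower semicontinuity of $f$, whereas the value function is only guaranteed to be upper semicontinuous. A concave, upper semicontinuous function on a compact convex set need not attain its infimum. For instance, take $f(x,y)=-y^2/x$ with $f(0,0)=0$ on the disc $(x-1)^2+y^2\le1$: its infimum $-2$ is approached along the boundary circle but not attained. The equality of infima still holds there, by exactly your argument.

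What the paper's route buys, when attainment is available, is extra information: the lower endpoint is realised at an extreme point of the hull, hence by an actual member of $\Pset_0$. This holds for instance when the hull is a polytope, on which a concave function is automatically lower semicontinuous. It is what justifies computing the lower endpoint by vertex enumeration, as in Study 1.

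Your extension to the closed convex hull via upper semicontinuity is also correct; in effect it is the Krein--Milman form of the same statement. In finite dimensions it is not needed, since the convex hull of a compact set is already compact by Carath\'eodory's theorem.
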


\begin{proof}
The inequality is inclusion. A concave function attains its infimum over a
compact convex set at an extreme point, and every extreme point of
$\mathrm{conv}\,\Pset_0$ lies in $\Pset_0$.
\end{proof}

For the concave value function considered here, a nonconvex family can
therefore be handled through its convex hull: the upper endpoint of the
response interval can be enlarged, so the upper certificate is conservative and
may be strictly so, while the lower endpoint is unchanged. The statement
concerns the two endpoints of the response interval and nothing else;
convexification does not preserve membership of a particular law in the
family, nor any quantity defined on the family rather than on its hull. The
method itself is unchanged, and this is the precise sense in which CSM is
independent of the representation of constitutive knowledge.

\subsection{The occupation measure and the affine pairing}\label{sec:affine}

A structure does not use the constitutive law everywhere in strain space. In a
given state it evaluates the law only at the strains reached at its material
points or quadrature points, and each evaluation enters the total energy with
a weight: the quadrature weight, the metric of the element, and any damage or
degradation factor that scales the density at that point. The object that
records where the structure is asking the law a question, and how strongly
each question counts, is the occupation measure.

\begin{definition}[Occupation measure and pairing]\label{def:occ}
A state $z\in\Zsp$ determines the nonnegative measure
\begin{equation}\label{eq:occ}
  \occ(z)\;=\;\sum_{q}w_q(z)\,\delta_{E_q(z)}\;\in\;\Mplus(\Esp),
\end{equation}
the strains at which the constitutive function is consulted, weighted by
quadrature, metric and any degradation factors. The incremental functional is
\begin{equation}\label{eq:pairing}
  \Pi(z;\psi)\;=\;\int_{\Esp}\psi\,d\occ(z)\;+\;c(z),
\end{equation}
affine in $\psi$ for each fixed $z$, with $c(z)$ collecting the terms
independent of $\psi$.
\end{definition}

Here $\occ(z)$ contains all the strain information of the state that the law
can see, and $c(z)$ contains everything that does not depend on the law: the
work of the external loads, the fracture-surface energy, the penalty and
regularisation terms. Equation~\eqref{eq:pairing} says that once the state is
fixed, the energy is a weighted sum of values of $\psi$ plus a constant. That
is the whole of the structure on which the method rests, and it is worth
being clear about where it comes from.

Equation~\eqref{eq:pairing} is a statement about assembly, not about the
material model. An energy density evaluated at quadrature points and summed
with nonnegative weights is a linear functional of the density, and that
functional is the occupation measure; nothing about the form of $\psi$ enters.
Three features of nonlinear mechanics leave this untouched. Geometric
nonlinearity changes the strains $E_q(z)$ at which $\psi$ is consulted, not
the linearity of the sum. Degradation and damage multiply the weights
$w_q(z)$ by nonnegative factors and so keep $\occ(z)$ in $\Mplus(\Esp)$.
Irreversibility constrains the admissible states --- a monotone phase field, an
obstacle on an interfacial variable --- and never the pairing, so the
incremental functional remains affine in $\psi$ on a state space that is
merely smaller, or nonconvex; the nonconvexity is the subject of
Theorem~\ref{thm:saddle}.

The one operation that destroys affinity is the elimination of an internal
variable by local minimisation inside the constitutive evaluation. The
condensed density is then the minimum over that variable of an affine density,
and is concave rather than affine in $\psi$; the plane-stress condensation of
the through-thickness strain in Section~\ref{sec:s4} is the worked instance.
The repair is exact: the eliminated variable is restored to the state. The
density is affine in $\psi$ on the enlarged state, the occupation measure is
taken there, and every statement below holds on the enlarged state without
change. Concavity on the condensed state would still make tangent
linearisations valid majorants and so preserve the upper certificate of
Section~\ref{sec:cert}, but it would forfeit the exact affine decomposition on
which the finite dual certificate and the coherence theorem rest. Enlargement
is therefore preferred, and it costs only that the constitutive evaluation
expose the variable it eliminates.

The same observation extends from an increment to a loading history. A
trajectory $(z_1,\dots,z_N)$ over $N$ increments is an element of the product
state space, and when the admissible class of trajectories is declared
independently of $\psi$ --- irreversibility $d_k\ge d_{k-1}$ with the increment
problems posed on a fixed history, for instance --- the path functional
$\sum_k\Pi_k(z_k;\psi)$ is affine in $\psi$ with occupation measure
$\sum_k\occ_k(z_k)$, by summation of \eqref{eq:pairing} over increments.
Everything that follows then applies to the path problem verbatim, with the
path solve as the oracle; this is the sense in which Study~3 certifies a
trajectory class. What it does not cover is a class whose admissibility itself
depends on $\psi$, as when the material selects among competing branches; that
case is taken up in Section~\ref{sec:concl}.

\subsection{Set energy and the value function}\label{sec:seteng}\label{sec:value}

With the law replaced by a set, the natural question about a state is: what is
the largest energy any admissible law can give it? Because the energy is a
weighted sum of values of $\psi$, this is the largest value of a linear
functional over the set $\Pset$, which in convex analysis is called the
support function of $\Pset$ in the direction given by the weights. The
direction here is the occupation measure of the state.

\begin{definition}[Set energy]\label{def:seteng}
The energy of a state $z$ under the set $\Pset$ is
\begin{equation}\label{eq:seteng}
  \Pihat(z;\Pset)\;=\;\sup_{\psi\in\Pset}\Pi(z;\psi)\;=\;\sig\bigl(\occ(z)\bigr)+c(z),
  \qquad
  \sig(\occ)=\sup_{\psi\in\Pset}\int\psi\,d\occ .
\end{equation}
\end{definition}

$\sig$ is the support function of $\Pset$, restricted to $\Mplus(\Esp)$: it is
sublinear, finite and lower semicontinuous, and it is not additive in general.
The last property has a physical meaning that the reader should hold on to.
Additivity would mean that the worst energy of a state could be assembled
strain by strain, taking the worst material at each strain separately. It
cannot be, in general, because one material must serve the whole state: the
set energy is nonlocal in strain space. Section~\ref{sec:coh} quantifies
exactly when and by how much.

Three further quantities are needed. For a single material $\psi$, the
mechanics returns the equilibrium energy, which we call the value function
$f(\psi)$. Over the admissible set, the largest such energy is the robust
value $\WW(\Pset)$, the worst case the evidence permits. And the set of all
equilibrium energies admissible materials can produce is the response set.

\begin{definition}[Value function, robust value, response set]
\begin{equation}\label{eq:vals}
  f(\psi)=\inf_{z\in\Zsp}\Pi(z;\psi),\qquad
  \WW(\Pset)=\sup_{\psi\in\Pset}f(\psi),\qquad
  \Rint(\Pset)=f(\Pset)\subset\mathbb{R}.
\end{equation}
\end{definition}

The next result is the reason the method is computable. It says three things
a mechanics reader can use directly: the worst case over the admissible set is
a concave maximisation, so it has no spurious local optima; the response set
is an interval, so ``the range of predictions the evidence permits'' is a
single pair of numbers; and the gradient information needed to find the worst
case is delivered by the finite element solve itself, with no additional
computation.

\begin{theorem}[Structure of the value function]\label{thm:value}
Under \eqref{eq:pairing}:
\begin{enumerate}[label=(\roman*),leftmargin=2em,itemsep=1pt]
\item $f$ is concave and upper semicontinuous, and continuous along every
      segment of $\Pset$; hence $\Rint(\Pset)=[\inf_\Pset f,\ \WW(\Pset)]$ is an
      interval.
\item If the infimum in \eqref{eq:vals} is attained at $z^{\star}(\psi)$, then
      $\occ(z^{\star}(\psi))$ is a supergradient of $f$ at $\psi$, and
      $\Pi(z^{\star}(\psi);\cdot)$ is an affine majorant of $f$ exact at $\psi$.
\item $\WW(\Pset)$ is attained; $\inf_\Pset f$ is attained at an extreme point of
      $\Pset$.
\end{enumerate}
\end{theorem}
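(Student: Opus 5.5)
The plan is to obtain every part from the fact that $f$ is a pointwise infimum of affine functions of $\psi$. By \eqref{eq:pairing}, for each fixed $z\in\Zsp$ the map $\psi\mapsto\Pi(z;\psi)=\int\psi\,d\occ(z)+c(z)$ is affine. Definition~\ref{def:P} makes the pairing continuous in the topology of $\Pset$, so each such map is also continuous. Hence $f=\inf_z\Pi(z;\cdot)$ is concave and upper semicontinuous, with no assumption on $\Zsp$, since both properties are preserved under arbitrary infima.

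For continuity along segments, fix $\psi_0,\psi_1\in\Pset$ and set $g(t)=f((1-t)\psi_0+t\psi_1)$ on $[0,1]$. This is a concave upper semicontinuous function of one real variable. Concavity gives continuity on $(0,1)$. At the endpoints, concavity gives $\liminf_{t\to0}g(t)\ge g(0)$, because $g(t)\ge(1-t)g(0)+t\,g(1)$, while upper semicontinuity gives $\limsup\le g(0)$. So $g$ is continuous on $[0,1]$.

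One point needs attention first: $f$ must not take the value $-\infty$ anywhere on $\Pset$. If $f>-\infty$ at two points, then concavity keeps it finite on the segment between them. I would either state finiteness as a standing assumption (the incremental problems are well posed for admissible laws) or check that nothing else breaks. The interval claim then follows. $\Pset$ is convex, hence path-connected through segments, and $f$ is continuous on each segment, so $f(\Pset)$ is a connected subset of $\mathbb{R}$. It therefore contains every value between $\inf_\Pset f$ and $\sup_\Pset f$. Part (iii) supplies attainment of the two endpoints, which yields the closed interval $[\inf_\Pset f,\WW(\Pset)]$. I will prove (iii) before closing (i) to avoid circularity.

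Part (ii) is a one-line supergradient inequality. Let $z^\star=z^\star(\psi)$ attain the infimum. For any $\phi$,
\[
f(\phi)\le\Pi(z^\star;\phi)=\Pi(z^\star;\psi)+\int(\phi-\psi)\,d\occ(z^\star)=f(\psi)+\langle\occ(z^\star),\phi-\psi\rangle .
\]
This is exactly the statement that $\occ(z^\star)$ is a supergradient of $f$ at $\psi$, and that $\Pi(z^\star;\cdot)$ is an affine majorant of $f$ that is exact at $\psi$.

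For (iii), $\WW(\Pset)$ is attained because an upper semicontinuous function attains its supremum on the compact set $\Pset$, which is nonempty by Definition~\ref{def:P}. For the infimum I would follow the argument of Proposition~\ref{prop:conv}, and this is the step I expect to be the main obstacle. A concave function does not in general attain its minimum on a compact set, since upper semicontinuity is the wrong direction.

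The route I would try first has three steps:
\begin{enumerate}[label=(\alph*),leftmargin=2em,itemsep=1pt]
\item By Krein--Milman, $\Pset=\overline{\mathrm{conv}}\,\mathrm{ext}\,\Pset$.
\item Concavity gives $f\ge\min$ over extreme points on finite convex combinations.
\item Pass this inequality to the closure.
\end{enumerate}
Step (c) is where upper semicontinuity again points the wrong way. To get past it I would use that $\inf_\Pset f=\inf_\Pset\inf_z\Pi(z;\cdot)=\inf_z\inf_{\psi\in\Pset}\Pi(z;\psi)$. For each $z$ the inner problem minimises a continuous affine functional over a compact convex set, so by Bauer's principle its minimum is attained at an extreme point. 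This shows that $\inf_\Pset f=\inf_{\mathrm{ext}\,\Pset}f$. Attainment then needs either lower semicontinuity of $f$, which holds, for example, if $\Pset$ is finite-dimensional (a concave function that is finite on a polytope, or on the relative interior, is continuous there), or compactness of the set of minimising states. I would state whichever of these hypotheses the continuity of the pairing actually delivers in the paper's setting, and in the finite-dimensional case conclude via the continuity of $f$ on $\Pset$.
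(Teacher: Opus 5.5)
Your treatment of (i), (ii) and the attainment of $\WW(\Pset)$ follows the paper's proof: infimum of continuous affine cuts, continuity along segments with the intermediate value theorem, the supergradient inequality, and upper semicontinuity on a compact set. You are more careful than the paper on two points it glosses: finiteness of $f$, and the fact that continuity at the ends of a segment needs upper semicontinuity, not concavity alone.

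The gap is the step you flagged, and it cannot be closed as stated. The paper disposes of it with ``a concave function attains its infimum over a compact convex set at an extreme point''. That is Bauer's minimum principle, which needs lower semicontinuity, and you are right that an infimum of cuts only has the upper kind. Your exchange of infima correctly yields $\inf_\Pset f=\inf_{\mathrm{ext}\,\Pset}f$ without attainment. Your fallback is wrong, though: finite dimensionality gives continuity only on the relative interior, not lower semicontinuity on $\Pset$.

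Here is a counterexample. Take $\Pset=\{(x,y):(x-1)^2+y^2\le1\}\subset\mathbb{R}^2$, $\Zsp=\mathbb{R}$ and $\Pi(t;x,y)=t^2x-2ty$. This has the form \eqref{eq:pairing} on three strains, with $\psi(E_a)=x$, $\psi(E_b)=y=-\psi(E_c)$ and $\occ(t)=t^{2}\delta_{E_a}+2\max(-t,0)\,\delta_{E_b}+2\max(t,0)\,\delta_{E_c}\in\Mplus$. Then $f(x,y)=-y^2/x$ for $x>0$ and $f(0,0)=0$, and the inner infimum is attained at every material. Since $y^2\le 2x-x^2$ on $\Pset$, we have $f\ge x-2>-2$ for $x>0$, with $f=x-2$ on the boundary circle away from the origin. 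So $\inf_\Pset f=-2$ is approached along extreme points tending to the origin and never attained, and $f(\Pset)=(-2,0]$. Both the second half of (iii) and the closed interval in (i) fail.

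What suffices is one extra hypothesis.
\begin{itemize}
\item If $\Pset$ is a polytope, as in every study of the paper, concavity gives $f(\sum_i\lambda_iv_i)\ge\min_if(v_i)$ over finitely many vertices. Attainment at a vertex is then immediate, with no semicontinuity needed.
\item If $\Zsp$ is compact and $\Pi$ is jointly continuous, as in Theorem~\ref{thm:saddle}, Berge's theorem makes $f$ continuous. So $f$ attains its minimum at some $\psi_0$ with a minimising state $z_0$. Minimising the single cut $\Pi(z_0;\cdot)$ over $\Pset$ then gives an extreme point $\psi_1$ with $f(\psi_1)\le\Pi(z_0;\psi_1)\le\Pi(z_0;\psi_0)=f(\psi_0)$. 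This is the precise version of your ``compactness of the set of minimising states''.
\end{itemize}
Add one of these to the statement rather than looking for a proof of the general claim.
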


\begin{proof}
(i) $f$ is an infimum of affine functions; a finite concave function on a
segment is continuous on the closed segment, and the intermediate value
theorem along segments gives the interval. (ii) $\Pi(z^{\star}(\psi);\cdot)\ge f$
pointwise with equality at $\psi$, which is the definition of a supergradient
of a concave function. (iii) Upper semicontinuity on a compact set; a concave
function attains its infimum over a compact convex set at an extreme point.
\end{proof}

Part (ii) is the mechanical content. A supergradient of a concave function is
the slope of a tangent plane that lies above the function and touches it at
the point in question; it is the analogue of a gradient for a function that
need not be differentiable. Part (ii) says that \emph{the supergradient of the
value function at a material is the occupation measure of the state that
material produces.} In practice: solve the equilibrium problem for one
candidate material, read off the strains at the quadrature points and their
weights, and one has the tangent plane to the worst-case problem at that
material. The oracle that resolves constitutive uncertainty is the mechanics
solve one performs anyway, and the supporting functional it returns is a
by-product of assembly.

\subsection{Mixed-state duality and the complementary principle}\label{sec:duality}

The robust value was defined by taking the worst material and then the
equilibrium state it produces. Can the order be exchanged --- take a state
first and then the worst material for it, and minimise the result over states?
For elasticity the answer is yes, and it is what a worst-case design
computation does. For fracture or plasticity the set of admissible states is
not convex, and the exchange is only legitimate if the states are allowed to
be \emph{mixed}: weighted combinations of several states, in the sense of a
probability distribution over states. A mixed state is not a physical
configuration, but it is the object that makes the exchange exact, and the theorem below says
exactly when a pure state suffices.

\begin{theorem}[Mixed-state duality]\label{thm:saddle}
Let $\Zsp$ be a compact metric space, let $\Pset$ be compact and convex, and
let $\Pi$ be continuous on $\Zsp\times\Pset$. Let $\PZ$ denote the probability
measures on $\Zsp$ with the weak-$*$ topology, in which it is compact and
convex, and for $\mu\in\PZ$ set
$\Pihat(\mu;\Pset)=\sig(\int\occ\,d\mu)+\int c\,d\mu$. Then
\begin{equation}\label{eq:saddle}
  \WW(\Pset)\;=\;\sup_{\psi\in\Pset}\;\inf_{z\in\Zsp}\Pi(z;\psi)
  \;=\;\inf_{\mu\in\PZ}\Pihat(\mu;\Pset),
\end{equation}
both extrema are attained, and a saddle point $(\psi^{\star},\mu^{\star})$
exists. For pure states only weak duality holds:
\begin{equation}\label{eq:weak}
  \inf_{z\in\Zsp}\Pihat(z;\Pset)\;\ge\;\WW(\Pset),
\end{equation}
with equality when $\Zsp$ is convex and $\Pi(\cdot;\psi)$ is convex on $\Zsp$
for every $\psi$.
\end{theorem}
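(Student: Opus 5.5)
The plan is to apply Sion's minimax theorem to the mixed extension of $\Pi$. First, I define $\Phi(\mu,\psi)=\int_{\Zsp}\Pi(z;\psi)\,d\mu(z)$ on $\PZ\times\Pset$.

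For fixed $\mu$, $\Phi(\mu,\cdot)$ is affine in $\psi$ by \eqref{eq:pairing}, since integrating the pairing against $\mu$ gives $\int\psi\,d\bigl(\int\occ\,d\mu\bigr)+\int c\,d\mu$. This step needs $\psi\mapsto\int\psi\,d\occ(z)$ to be measurable and bounded in $z$, which follows from continuity of $\Pi$ on the compact set $\Zsp\times\Pset$. For fixed $\psi$, $\Phi(\cdot,\psi)$ is linear in $\mu$, and it is weak-$*$ continuous because $\Pi(\cdot;\psi)\in C(\Zsp)$.

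Continuity in $\psi$ for fixed $\mu$ follows from joint continuity on a compact set: $\Pi$ is uniformly continuous, so $\sup_z|\Pi(z;\psi)-\Pi(z;\psi')|\to0$ as $\psi'\to\psi$. Compactness and convexity of $\PZ$ come from Riesz–Markov together with Banach–Alaoglu, or equivalently Prokhorov on a compact metric space.

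Sion's theorem then gives $\sup_\psi\inf_\mu\Phi=\inf_\mu\sup_\psi\Phi$. I identify the two sides separately.

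On the left, $\inf_{\mu}\Phi(\mu,\psi)=\inf_z\Pi(z;\psi)=f(\psi)$. The inequality $\le$ uses Dirac masses, and $\ge$ holds because an average is at least the infimum. So the left side is $\WW(\Pset)$.

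On the right, $\sup_\psi\Phi(\mu,\psi)=\sig(\int\occ\,d\mu)+\int c\,d\mu=\Pihat(\mu;\Pset)$, by the affine identity above. Here $\int\occ\,d\mu$ is read as the barycentric measure, meaning the pairing functional $\psi\mapsto\int\!\int\psi\,d\occ(z)\,d\mu(z)$, so nothing beyond the pairing is needed.

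For attainment and the saddle point:
\begin{itemize}
\item $\mu\mapsto\sup_\psi\Phi$ is a supremum of continuous functions, hence lower semicontinuous on the compact set $\PZ$, so the infimum is attained.
\item $\psi\mapsto\inf_\mu\Phi=f(\psi)$ is an infimum of continuous functions, hence upper semicontinuous on the compact set $\Pset$, so the supremum is attained (consistent with Theorem~\ref{thm:value}(iii)).
\item Equal values with both extrema attained give the saddle point $(\psi^\star,\mu^\star)$ by the standard argument.
\end{itemize}

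Weak duality \eqref{eq:weak} is the trivial minimax inequality applied to pure states. For any $z$ and $\psi$, $\Pihat(z;\Pset)\ge\Pi(z;\psi)\ge f(\psi)$. Taking the supremum over $\psi$ and then the infimum over $z$ gives \eqref{eq:weak}. Alternatively, it follows from \eqref{eq:saddle} since Dirac masses are a subset of $\PZ$.

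For equality in the convex case, I apply Sion directly to $\Pi$ on $\Zsp\times\Pset$. Here $\Zsp$ is convex and compact, $\Pi(\cdot;\psi)$ is convex and continuous, and $\Pi(z;\cdot)$ is affine and continuous. This gives $\inf_z\sup_\psi\Pi=\sup_\psi\inf_z\Pi$, which is exactly equality in \eqref{eq:weak}.

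The main obstacle is a technical one: confirming that the topology on $\Pset$ from Definition~\ref{def:P} makes $\Pset$ a compact convex subset of a topological vector space, which is what Sion requires. It also has to make $\Phi(\mu,\cdot)$ continuous. I plan to get this continuity from the uniform-continuity argument above rather than from any structure of $\Vsp$. Sion needs only upper semicontinuity in $\psi$ and lower semicontinuity in $\mu$, so even a weaker hypothesis would suffice.
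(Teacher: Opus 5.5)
Your proposal is correct and follows the paper's proof essentially step for step. Both apply Sion's theorem to the mixed extension $\int\Pi(z;\psi)\,d\mu$ on $\PZ\times\Pset$, reduce the inner infimum to Dirac masses, obtain attainment from upper semicontinuity of $f$ and lower semicontinuity of $\mu\mapsto\Pihat(\mu;\Pset)$, and apply Sion directly to pure states in the convex case. Your compactness argument for continuity of the mixed functional in $\psi$ fills in a step the paper only asserts.
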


\begin{proof}
$\Pi(\mu;\psi)=\int\Pi(z;\psi)\,d\mu$ is linear and weak-$*$ continuous in
$\mu$ and affine and continuous in $\psi$. For fixed $\psi$,
$\inf_\mu\Pi(\mu;\psi)=\inf_z\Pi(z;\psi)$ because a continuous linear
functional on $\PZ$ is minimised at a Dirac mass, and the infimum over the
compact $\Zsp$ is attained. Sion's minimax theorem applies on the compact
convex $\Pset$ and the convex $\PZ$ and exchanges the order; the inner
supremum over $\Pset$ is $\sig$ evaluated on the mixed occupation measure.
Attainment: $f=\inf_z\Pi(z;\cdot)$ is upper semicontinuous on the compact
$\Pset$, so the supremum on the left of \eqref{eq:saddle} is attained at some
$\psi^{\star}$; $\mu\mapsto\Pihat(\mu;\Pset)$ is a supremum of continuous
functions of $\mu$, hence lower semicontinuous on the weak-$*$ compact $\PZ$,
so the infimum on the right is attained at some $\mu^{\star}$; equality of the
two values then makes $(\psi^{\star},\mu^{\star})$ a saddle point. Without
compactness of $\Zsp$ or continuity of $\Pi$ the equality can hold with one or
both extrema unattained. Weak duality \eqref{eq:weak} is
$\inf\sup\ge\sup\inf$; under the stated convexity Sion's theorem applies to
pure states directly.
\end{proof}

The distinction in Theorem~\ref{thm:saddle} is part of the method. Pure-state
mechanics minimises $\Pi(\cdot;\psi)$ for a given material; robust
certification minimises the set energy over \emph{mixed} states. In elasticity
the two coincide. In fracture, plasticity and any mechanics with a nonconvex
state space they need not, and $\Pihat(z;\Pset)$ minimised over pure states is
then an upper bound on the robust value, not the robust value.

\begin{proposition}[Complementary principle]\label{prop:comp}
Suppose the mechanics admits a complementary variational principle with
functional $\Pi^{c}(\zeta;\psi^{c})$, affine in a dual constitutive variable
$\psi^{c}$ (compliance, dissipation potential) through a dual occupation
measure $\occ^{c}(\zeta)\in\Mplus$, and with value
$f^{c}(\psi^{c})=\inf_\zeta\Pi^{c}=-f(\psi)$ whenever $\psi^{c}$ is the dual of
$\psi$. Let $\Pset^{c}$ be the image of $\Pset$ under this duality. Then
\[
  \inf_{\Pset}f\;=\;-\sup_{\Pset^{c}}f^{c},
\]
so the lower endpoint of $\Rint(\Pset)$ is the upper endpoint of the
complementary problem. If $\Pset^{c}$ is convex, Theorem~\ref{thm:saddle} and
the certificates of Section~\ref{sec:cert} apply to it verbatim; if not,
Proposition~\ref{prop:conv} applies to $\mathrm{conv}\,\Pset^{c}$ and the
resulting certificate of $\inf_\Pset f$ is conservative.
\end{proposition}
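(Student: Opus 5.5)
The identity is a change of variables, so the plan is to push the infimum over $\Pset$ through the duality map and use the hypothesis $f^{c}=-f$ pointwise. Write $D:\Pset\to\Pset^{c}$ for the map sending a law $\psi$ to its dual variable $\psi^{c}=D(\psi)$ (compliance, conjugate dissipation potential). By the definition of $\Pset^{c}$, $D$ is surjective onto $\Pset^{c}$, and by hypothesis $f^{c}(D\psi)=-f(\psi)$ for every $\psi\in\Pset$. Then
\[
  \inf_{\psi\in\Pset}f(\psi)\;=\;\inf_{\psi\in\Pset}\bigl(-f^{c}(D\psi)\bigr)
  \;=\;-\sup_{\psi\in\Pset}f^{c}(D\psi)\;=\;-\sup_{\psi^{c}\in\Pset^{c}}f^{c}(\psi^{c}),
\]
where the last step uses only surjectivity of $D$ onto its image. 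Injectivity of $D$ is not needed, nor any structure of $D$ beyond the pointwise identity. This gives the displayed equality, and with it the identification of $\inf_\Pset f$ with the upper endpoint of the complementary problem.

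For the second claim, I would check that the complementary problem satisfies the standing hypotheses of Section~\ref{sec:csm}. The functional $\Pi^{c}(\zeta;\psi^{c})=\int\psi^{c}\,d\occ^{c}(\zeta)+c^{c}(\zeta)$ is affine in $\psi^{c}$ with $\occ^{c}(\zeta)\in\Mplus$, which is exactly the structure of \eqref{eq:pairing}. Consequently $f^{c}$ is an infimum of affine functions, and Theorem~\ref{thm:value} applies to it unchanged. If $\Pset^{c}$ is convex and compact, Theorem~\ref{thm:saddle} and the certificates built on it therefore apply to $\sup_{\Pset^{c}}f^{c}$ with $\Pset^{c}$ in place of $\Pset$.

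If $\Pset^{c}$ is not convex, I would invoke Proposition~\ref{prop:conv} with $\Pset_0=\Pset^{c}$ and the concave function $f^{c}$ extended to $\mathrm{conv}\,\Pset^{c}$. This gives $\sup_{\mathrm{conv}\,\Pset^{c}}f^{c}\ge\sup_{\Pset^{c}}f^{c}$. Negating, $-\sup_{\mathrm{conv}\,\Pset^{c}}f^{c}\le\inf_\Pset f$, so the certificate computed on the hull is a valid, possibly loose, lower bound on $\inf_\Pset f$. That is the sense in which it is conservative.

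The main obstacle is the topological hypothesis: compactness of $\Pset^{c}$ and continuity of the dual pairing, which Theorem~\ref{thm:saddle} needs. The approach would be to require $D$ to be continuous in the topologies of Definition~\ref{def:P}, so that $\Pset^{c}=D(\Pset)$ is compact as the continuous image of a compact set. Its closed convex hull is then compact as well in the finite-dimensional or locally convex settings considered, which is what Proposition~\ref{prop:conv} uses. If continuity of $D$ cannot be assumed, I would state this as an explicit additional hypothesis. The identity itself does not depend on it.
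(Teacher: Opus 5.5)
Your argument is correct and is the direct substitution the paper leaves implicit, since it states Proposition~\ref{prop:comp} without a separate proof: the identity follows from $f^{c}(D\psi)=-f(\psi)$ and surjectivity of $D$ onto $\Pset^{c}$, and the hull case follows from the inclusion half of Proposition~\ref{prop:conv}, which needs no compactness. Your remark that applying Theorem~\ref{thm:saddle} on $\Pset^{c}$ tacitly requires $\Pset^{c}$ to be compact (for instance via continuity of $D$) is a fair clarification of a hypothesis the statement leaves unstated.
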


The choice of variational principle therefore selects which endpoint of the
response interval the witness machinery certifies. Where both principles exist
and both admissible sets are convex, the whole interval is witness-certified.

\section{Certification, coherence and information acquisition}\label{sec:ccc}

Three practical questions follow from Section~\ref{sec:csm}, and this section
answers them in turn. First, how does one compute the worst case over the
admissible set, and how much of it can be trusted when the finite element
solver is a local one? Second, when is the common engineering shortcut of
taking the worst material at every point actually correct, and when does it
overstate the worst case? Third, once the structure has identified the
constitutive uncertainty that matters, which experiment should be performed
next?

\subsection{Certification by mechanics}\label{sec:cert}

The computation is a loop. Pick an admissible material, solve the equilibrium
problem for it, and record the state. That state gives a tangent plane to the
worst-case problem, an affine function of $\psi$ that lies above the value
function everywhere and touches it at the chosen material; such a plane is
called a cut, and the state that produced it a witness. The minimum of the
cuts collected so far is an upper envelope of the value function, and its
maximum over $\Pset$ is an upper bound on the robust value. The energies of
the recorded states at their own materials are lower bounds, provided the
solver actually found the minimum. Choose the next material where the upper
envelope is highest, and repeat. This is outer linearisation, the cutting-plane
method of convex optimisation, with the finite element solve as its oracle.

One point must be understood before the theorem. A global solver tells us that
no other admissible state has lower energy for the chosen material. A local
solver tells us only that the returned state is an equilibrium. The upper
bound needs only that the returned state be admissible, so it is rigorous
with any solver. The lower bound needs the returned energy to be the true
minimum, so it is rigorous only under global minimisation. CSM treats the two
halves of the bracket differently for this reason, and the studies rely on
the difference.

\begin{definition}[Witnesses and brackets]
Let $F=\{z_1,\dots,z_k\}\subset\Zsp$ be states obtained as minimisers at
materials $\psi_1,\dots,\psi_k\in\Pset$. Define
\begin{equation}\label{eq:bracket}
  L_k=\max_{j\le k}\Pi(z_j;\psi_j),
  \qquad
  U_k=\sup_{\psi\in\Pset}\min_{j\le k}\Pi(z_j;\psi)
     =\min_{\mu\in\mathfrak{P}(F)}\Pihat(\mu;\Pset).
\end{equation}
\end{definition}

\begin{theorem}[Certificates]\label{thm:cert}
\begin{enumerate}[label=(\roman*),leftmargin=2em,itemsep=1pt]
\item \emph{(Upper bound, no hypothesis.)} $\WW(\Pset)\le U_k$ for every
      finite $F$ and every choice of the $\psi_j$; it is required only that
      each $z_j$ be a state.
\item \emph{(Lower bound, under globality.)} If each $z_j$ is a global
      minimiser of $\Pi(\cdot;\psi_j)$, then $L_k\le\WW(\Pset)$.
\item \emph{(Monotonicity.)} $U_k$ is nonincreasing and $L_k$ nondecreasing in
      $k$; the cut functionals do not depend on $\Pset$, so witnesses remain
      valid when $\Pset$ is replaced by a subset.
\item \emph{(Finite dual certificate.)} With $\mathcal{T}=\overline{\mathrm{span}}\{\occ(z):z\in\Zsp\}$
      and $n_{\mathrm{vis}}=\dim\Vsp/\mathcal{T}^{\perp}$, the minimising mixture
      in \eqref{eq:bracket} can be taken with at most $n_{\mathrm{vis}}+1$
      atoms, and at closure the dual mixture $\mu^{\star}$ of
      Theorem~\ref{thm:saddle} is a finite mixture of equilibrium states of the
      same cardinality.
\item \emph{(Two-sided convergence, under globality.)} Suppose that the
      mechanics solve returns a global minimiser of $\Pi(\cdot;\psi)$ at every
      queried $\psi$, so that (ii) holds at every step, and that the cuts are
      equi-Lipschitz on $\Pset$. Then choosing $\psi_{k+1}$ as a maximiser in
      the definition of $U_k$ and solving at it gives $U_k-L_k\to0$, and the
      loop terminates at any prescribed positive tolerance in finitely many
      solves. Without the globality hypothesis $U_k$ still decreases to the
      supremum of the cut minimum, but $L_k$ need not approach $\WW(\Pset)$
      and no two-sided statement holds. Termination at tolerance zero in
      finitely many solves is not guaranteed even under globality.
\end{enumerate}
\end{theorem}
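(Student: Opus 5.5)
I would prove the five parts in order. Parts (i)--(iii) follow directly from the affine pairing \eqref{eq:pairing}. Part (iv) is a Carathéodory argument in the finite-dimensional quotient seen by the occupation measures. Part (v) is a Kelley-type cutting-plane convergence argument.

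For (i), fix any $\psi\in\Pset$. Since each $z_j\in\Zsp$, the definition of $f$ gives $f(\psi)=\inf_z\Pi(z;\psi)\le\min_{j\le k}\Pi(z_j;\psi)$. Taking the supremum over $\Pset$ yields $\WW(\Pset)\le U_k$. Nothing about how $z_j$ was obtained enters.

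For (ii), globality gives $\Pi(z_j;\psi_j)=f(\psi_j)\le\WW(\Pset)$ because $\psi_j\in\Pset$; maximising over $j$ gives $L_k\le\WW(\Pset)$.

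For (iii), adding a witness shrinks $\min_j$ pointwise and enlarges $\max_j$, which gives the two monotonicities. The cuts $\Pi(z_j;\cdot)$ are defined on $\Vsp$ independently of $\Pset$, so (i) reapplied to a subset $\Pset'\subset\Pset$ keeps them valid as majorants of $f$ restricted to $\Pset'$.

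The equality of the two expressions for $U_k$ in \eqref{eq:bracket} I would take as Theorem~\ref{thm:saddle} applied with $\Zsp$ replaced by the finite (compact) set $F$.

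For (iv), note first that $\Pi(z;\psi)$ depends on $\psi$ only through its class in $\Vsp/\mathcal{T}^{\perp}$, since $\int\psi\,d\occ(z)$ vanishes on $\mathcal{T}^\perp$. So the problem lives in a space of dimension $n_{\rm vis}$. The set energy of a mixture is
\[
\sig\Bigl(\textstyle\sum_j\mu_j\occ(z_j)\Bigr)+\sum_j\mu_j c(z_j),
\]
a function of the point $\bigl(\sum_j\mu_j\occ(z_j),\sum_j\mu_jc(z_j)\bigr)$. This point lies in the convex hull of finitely many points of $\mathcal{T}\times\mathbb{R}$, where $\mathcal{T}$ is identified with the dual of the $n_{\rm vis}$-dimensional quotient.

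Carathéodory alone would give $n_{\rm vis}+2$ atoms. To get $n_{\rm vis}+1$ I would argue instead through the primal: $U_k$ is the maximum of $t$ subject to $t\le\Pi(z_j;\psi)$ for all $j$ and $\psi\in\Pset$. This is a concave programme in $n_{\rm vis}+1$ variables $(\psi \bmod \mathcal{T}^\perp, t)$. By a Helly/Carathéodory-type result for the minimax (the Lagrangian form over $\mathfrak{P}(F)$), an optimal multiplier exists supported on at most $n_{\rm vis}+1$ active cuts. Concretely, the optimal $\mu$ minimises a convex function whose value depends only on $\sum_j\mu_j\bigl(\occ(z_j),c(z_j)\bigr)$, and the optimum can be moved to a face of the simplex of dimension at most $n_{\rm vis}$ without changing that point's $\sig$-relevant part. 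The $c$ coordinate only helps, since lowering $\sum_j\mu_jc_j$ at fixed occupation lowers the objective, so the optimum sits on the lower boundary and one fewer atom is needed.

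At closure ($U_k=\WW$), the same reduction applied to the saddle mixture $\mu^\star$ of Theorem~\ref{thm:saddle} gives the finite dual. Its atoms are equilibrium states because they are active cuts at $\psi^\star$, where complementary slackness forces $\Pi(z_j;\psi^\star)=f(\psi^\star)$. This atom-count sharpening and the identification of atoms as minimisers at $\psi^\star$ is the step I expect to be the main obstacle. In particular it requires that $n_{\rm vis}$ be finite, which is implicit in the statement.

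For (v), let $\psi_{k+1}$ maximise $\min_{j\le k}\Pi(z_j;\cdot)$, so that $U_k=\min_{j\le k}\Pi(z_j;\psi_{k+1})$. Under globality, $f(\psi_{k+1})=\Pi(z_{k+1};\psi_{k+1})\le L_{k+1}$. If $U_k-L_k\ge\varepsilon$ for all $k$, then for every $j\le k$ we get
\[
\Pi(z_j;\psi_{k+1})-\Pi(z_{k+1};\psi_{k+1})\ge U_k-L_{k+1}\ge\varepsilon',
\]
where the last inequality uses monotonicity, passing to $U_{k+1}\le U_k$ and reindexing. With the equi-Lipschitz constant $\Lambda$, the cut $z_j$ exceeds $f$ by at least $\varepsilon'$ at $\psi_{k+1}$ while being exact at $\psi_j$. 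Together with $\Pi(z_j;\psi_j)\le L_k$ this forces $\|\psi_{k+1}-\psi_j\|\ge\varepsilon'/(2\Lambda)$ for all $j\le k$. An infinite $\varepsilon'/(2\Lambda)$-separated sequence in the compact $\Pset$ is impossible, so $U_k-L_k\to0$ and termination at positive tolerance is finite.

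The negative remarks need only counterexamples or pointers. Without globality, $L_k$ can be a local-minimum energy exceeding nothing useful. For tolerance zero, Kelley's method on a polytope with a smooth concave objective does not terminate finitely, as cited \cite{Kelley1960}.
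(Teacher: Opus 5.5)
Your proposal is correct and follows the paper's proof. Parts (i)--(iii) are the same majorant and monotonicity arguments. Part (iv) is the same reduction to the $n_{\rm vis}$-dimensional quotient $\Vsp/\mathcal{T}^{\perp}$ followed by a Helly-type count, and part (v) is the Kelley separation argument that the paper only calls standard. Your dual-side refinement in (iv) --- fixing the occupation point of an optimal mixture and lowering $\sum_j\mu_j c(z_j)$ to a basic solution with at most $n_{\rm vis}+1$ positive weights --- and your complementary-slackness identification of the atoms at closure spell out what the paper's one-line appeal to Helly leaves implicit. The one place the paper does more is non-termination at tolerance zero: it exhibits the explicit instance $\Pset=[-1,2]$, $f(\theta)=-\theta^{2}$, where you only cite Kelley.
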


\begin{proof}
(i) Each $\Pi(z_j;\cdot)$ majorises $f$ because $z_j\in\Zsp$; the minimum of
majorants is a majorant, and its supremum over $\Pset$ dominates $\WW$. The
last equality in \eqref{eq:bracket} is linear-programming duality. (ii) When
$z_j$ is a global minimiser, $\Pi(z_j;\psi_j)=f(\psi_j)\le\WW$. (iii) Adding
a majorant cannot raise the minimum; the cuts are functions on $\Vsp$ defined
by states alone. (iv) Every cut functional $\Pi(z_j;\cdot)$ depends on $\psi$ only through
$\langle\psi,\occ(z_j)\rangle$, hence only through the class of $\psi$ in
$\Vsp/\mathcal{T}^{\perp}$, of dimension $n_{\mathrm{vis}}$ by the definition of
$\mathcal{T}$; Helly's theorem then bounds
the number of affine functions needed to determine the maximum of their
minimum. (v) Under globality each solve returns the exact value $f(\psi_j)$ and the
exact supergradient $\occ(z_j)$, and the loop is outer linearisation of a
concave maximisation with an exact first-order oracle, for which convergence
of the gap is standard; without globality the oracle returns a majorant of
$f$ at $\psi_j$ rather than $f(\psi_j)$, which is why the lower sequence loses
its meaning. The failure of exact finite
termination is exhibited by a one-dimensional instance with $\Pset=[-1,2]$
and $f(\theta)=-\theta^{2}$, whose iterates approach but never reach the
maximiser.
\end{proof}

Part (iv) needs its terms explained. The occupation measures of all states
the problem can reach span a subspace of measures, $\mathcal{T}$; a constitutive
direction that is orthogonal to that subspace changes $\psi$ without changing
the energy of any reachable state, so the structure cannot see it. The
visible directions are the remaining ones, and $n_{\mathrm{vis}}$ counts them;
it is a property of the mechanics and the class together, and it is to be
distinguished from $n_{\mathrm{raw}}$, the number of parameters in the
representation of the class, which the mechanics has no reason to see in full.
A dual certificate is a weighted combination of witness states whose set
energy equals the upper bound, so that the bound is not merely asserted but
exhibited by states the mechanics produced. Part (iv) says such a
certificate never needs more states than there are visible directions plus
one. The scope of (iv) must be stated exactly. The \emph{size of the certificate}
--- the number of equilibrium states whose mixture realises the upper bound
--- is governed by the constitutive class and not by the mesh or the number of
degrees of freedom. The \emph{number of solves} required to reach a tolerance
is governed by the geometry of $f$ and is not bounded by
$n_{\mathrm{vis}}$.

\begin{remark}[Local oracles, and diagnostics versus certificates]
The two halves of the bracket are not symmetric, and the asymmetry is used
throughout the studies. If the oracle returns local rather than global
minimisers, (i) is unaffected and $U_k$ remains a valid bound, while the
hypothesis of (ii) fails and $L_k$ becomes a bound relative to the states the
oracle can reach. A local solver certifies only the stationarity or the
algorithmic convergence that it actually tests: neither a small residual nor
convergence of the iteration establishes local minimality, let alone global
minimality, so with such a solver $L_k$ is a solver-realized quantity and not
a rigorous lower bound unless global minimality of the witness state is
established by other means. Algorithmic diagnostics and variational
certificates are therefore distinct objects. An iteration may converge to a
stationary state without establishing that the state is minimising, whereas
the certificate (i) remains valid independently of the convergence properties
of the particular oracle, because it uses the returned state only as an
admissible competitor. An observed inversion $L_k>U_k$ is impossible under
global minimisation and is a certificate that the oracle returned a non-global
state at one of the witness materials.
\end{remark}

Three terms are used with fixed meanings from here on. \emph{Certified} is
reserved for $U_k$, which is rigorous under the assumptions on $\Pset$ alone,
and for $L_k$ and the closed bracket when the globality hypothesis of (ii) is
met, as it is in Studies~1 and~2, whose oracles are exact. When the oracle is
an alternate-minimisation or otherwise local solver, as in Studies~3 and~4,
the value $L_k$ it produces is a \emph{solver-realized} value and is reported
as such; the bracket is then two-sided only conditionally on a globality that
is not established, and its closure is consistent with globality without
proving it. A \emph{guaranteed} contraction of the response interval is the
quantity $G_{\Rint}$ of Section~\ref{sec:info}, which holds for every outcome
of a measurement; the change observed when one outcome is imposed is a
\emph{realized} contraction for that outcome and carries no statement about
others. Certificates concern values of the declared incremental functional.
Reaction--displacement curves and phase fields shown for a reference or an
extreme material are the responses of that one material and are
\emph{illustrative}: certification of an energy carries no statement about a
force or a crack path.

Algorithm~\ref{alg:csm} states the certification loop in terms of the objects
above. Every step is an operation already defined; the algorithm adds nothing
to the theory and is given so that the workflow can be reproduced without
guessing.

\begin{algorithm}[t]
\caption{CSM certification by mechanics.}
\label{alg:csm}
\begin{algorithmic}[1]
\Require admissible set $\Pset$ (Definition~\ref{def:P}); incremental
  functional $\Pi(\cdot;\psi)$ on the admissible states $\Zsp$; mechanics
  oracle $\psi\mapsto z^{\star}(\psi)$; tolerance $\tau>0$
\State choose $\psi_1\in\Pset$; solve $z_1\gets z^{\star}(\psi_1)$; record
  $\occ(z_1)$; form the cut $\Pi(z_1;\cdot)=\langle\cdot,\occ(z_1)\rangle+c(z_1)$
\State $k\gets1$
\Loop
  \State $U_k\gets\sup_{\psi\in\Pset}\min_{j\le k}\Pi(z_j;\psi)$, attained at
    $\psi_{k+1}$ \Comment{upper bound; needs only $z_j\in\Zsp$}
  \State $L_k\gets\max_{j\le k}\Pi(z_j;\psi_j)$
    \Comment{lower bound only if each $z_j$ is a global minimiser; otherwise solver-realized}
  \If{$L_k>U_k$} \State report non-globality of the oracle at some $\psi_j$
    (Remark on local oracles); the certified output is $U_k$ \EndIf
  \If{$U_k-L_k\le\tau$ and the globality hypothesis holds} \State \textbf{break}
    \Comment{two-sided certificate to tolerance $\tau$}
    \ElsIf{the prescribed budget of oracle solves is exhausted} \State \textbf{break}
    \Comment{return $U_k$ as the current certified upper bound; no convergence is asserted} \EndIf
  \State solve $z_{k+1}\gets z^{\star}(\psi_{k+1})$; record $\occ(z_{k+1})$;
    add the cut $\Pi(z_{k+1};\cdot)$; $k\gets k+1$
\EndLoop
\Ensure $U_k$ (certified); $L_k$ (certified only under globality); the
  witness states $z_1,\dots,z_k$; the dual mixture attaining $U_k$
  (Theorem~\ref{thm:cert}(iv)), of at most $n_{\rm vis}+1$ states; the visible
  dimension $n_{\rm vis}$; the inversion diagnostic
\end{algorithmic}
\end{algorithm}

Two remarks on the algorithm. The supremum in line~4 is a finite-dimensional
convex programme on the parameterisation of $\Pset$: a linear programme when
the constraints of Definition~\ref{def:P} are linear in the parameters, as in
every study below. And the number of iterations of the loop is not the number
of states in the dual mixture it returns: the loop may run many times while
the certificate it produces has at most $n_{\rm vis}+1$ states.

\subsection{Constitutive coherence}\label{sec:coh}

Suppose a structure reaches three different strain states. The admissible
material that gives the highest energy at the first strain may not be the one
that gives the highest energy at the second or the third. Taking the maximum
independently at each strain therefore combines pieces of different materials
into a response that no single admissible material can produce. That is
nevertheless what conventional worst-case practice does, and the question is
by how much it overstates the worst case, and when it does not. Let
$\bar\psi(E)=\sup_{\psi\in\Pset}\psi(E)$, the pointwise envelope of the set,
and define the \emph{local worst energy} and the \emph{coherence gap}
\begin{equation}\label{eq:gap}
  \Pihat^{\mathrm{loc}}(z;\Pset)=\int\bar\psi\,d\occ(z)+c(z),
  \qquad
  \Delta_{\Pset}(z)=\Pihat^{\mathrm{loc}}(z;\Pset)-\Pihat(z;\Pset)\;\ge\;0 .
\end{equation}
The first quantity permits the worst material to change from one strain to
another; the second requires one admissible material to explain the whole
state. The gap between them is the cost of coherence, and the theorem says
exactly when it is zero: when the envelope, the function assembled from the
pointwise worst values, is itself an admissible material.

\begin{theorem}[Coherence]\label{thm:coh}
Let $S\subseteq\Esp$. The following are equivalent:
\begin{enumerate}[label=(\roman*),leftmargin=2em,itemsep=1pt]
\item $\sig$ is additive on $\Mplus(S)$;
\item all measures in $\Mplus(S)$ share a common maximiser in $\Pset$;
\item $\bar\psi|_S$ is attained by a member of $\Pset$.
\end{enumerate}
$\Delta_{\Pset}(z)=0$ for every state with $\occ(z)$ supported in $S$ if and
only if (i)--(iii) hold.
\end{theorem}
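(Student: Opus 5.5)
I will prove the cycle (iii)$\Rightarrow$(ii)$\Rightarrow$(i)$\Rightarrow$(iii), and then derive the statement about $\Delta_\Pset$ from it.

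\emph{(iii)$\Rightarrow$(ii).} Suppose some $\psi^{\ast}\in\Pset$ satisfies $\psi^{\ast}=\bar\psi$ on $S$. Take any $\occ\in\Mplus(S)$. Then
\[
  \int\psi^{\ast}\,d\occ=\int\bar\psi\,d\occ\ge\int\psi\,d\occ
\]
for every $\psi\in\Pset$. So $\psi^{\ast}$ is a common maximiser.

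\emph{(ii)$\Rightarrow$(i).} Let $\psi^{\ast}$ be a common maximiser. For every $\occ\in\Mplus(S)$ we have $\sig(\occ)=\int\psi^{\ast}\,d\occ$. This expression is linear in $\occ$, so $\sig$ is additive. Nonnegative scalings are handled by positive homogeneity.

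\emph{(i)$\Rightarrow$(iii).} This is the main step. For each $E\in S$, $\sig(\delta_E)=\sup_\Pset\psi(E)=\bar\psi(E)$, and compactness of $\Pset$ with continuity of the pairing gives a maximiser $\psi_E$. I first treat finitely many points $E_1,\dots,E_n\in S$ and set $\occ=\sum_i\delta_{E_i}$. Let $\psi_\occ\in\Pset$ attain $\sig(\occ)$. Additivity gives
\[
  \sum_i\psi_\occ(E_i)=\sig(\occ)=\sum_i\bar\psi(E_i).
\]
Each term satisfies $\psi_\occ(E_i)\le\bar\psi(E_i)$, so equality holds termwise. Hence the sets $K_E=\{\psi\in\Pset:\psi(E)=\bar\psi(E)\}$ have the finite intersection property. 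Each $K_E$ is closed, because evaluation at $E$ is the continuous pairing with $\delta_E$ (Definition~\ref{def:P}). Compactness of $\Pset$ then gives a point in $\bigcap_{E\in S}K_E$, and that point attains $\bar\psi|_S$.

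I expect the obstacle to lie here. The argument needs Dirac measures to lie in $\Mplus(S)$ and evaluation to be continuous in the assumed topology. For a finite quadrature setting this is immediate. In general I would rely on the compactness and continuity hypotheses of Definition~\ref{def:P}.

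\emph{The statement on $\Delta_\Pset$.} If (iii) holds and $\occ(z)$ is supported in $S$, then $\psi^{\ast}$ achieves $\int\bar\psi\,d\occ(z)$, so $\Delta_\Pset(z)=0$. Conversely, suppose the gap vanishes for every such state. Then
\[
  \sig(\occ(z))=\int\bar\psi\,d\occ(z)
\]
for those states. The right-hand side is additive, and the finite-intersection argument above runs with these measures in place of sums of Diracs. This requires the occupation measures of states supported in $S$ to be rich enough to include sums of Diracs at arbitrary finite subsets of $S$, or the statement to be read with $\Mplus(S)$ in place of occupation measures. I would state this reading explicitly, since the equivalence is really about measures in $\Mplus(S)$.
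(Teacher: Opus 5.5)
Your proof is correct and follows the paper's route: (iii)$\Rightarrow$(ii)$\Rightarrow$(i) directly, and (i)$\Rightarrow$(iii) by applying additivity to finite sums of Dirac masses on $S$ to get termwise equality, hence a common maximiser on every finite subset, then extending to all of $S$ by compactness (your finite-intersection formulation with the closed sets $K_E$ spells out what the paper leaves implicit). Your caveat on the converse of the final sentence is also justified and goes beyond the paper's proof, which does not address it: if every state had a single-Dirac occupation measure then $\Delta_\Pset\equiv0$ trivially while (iii) can fail (as in the two-law, two-strain example of Section~\ref{sec:csm}), so the ``only if'' needs the occupation measures of states to include positive combinations of Dirac masses on every finite subset of $S$, or must be read over $\Mplus(S)$.
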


\begin{proof}
$\sup\int\psi\,d\occ\le\int\sup\psi\,d\occ$ gives $\Delta_\Pset\ge0$. (iii)
$\Rightarrow$ (ii) $\Rightarrow$ (i) is immediate. For (i) $\Rightarrow$ (iii):
additivity on finite nonnegative combinations of Dirac masses on $S$ forces a
common maximiser of $\psi(E)$ over each finite subset of $S$, and compactness
of $\Pset$ extends this to $S$.
\end{proof}

\begin{corollary}[Which physics couples strains]\label{cor:lattice}
If $\Pset$ is closed under pointwise maximum, then $\bar\psi\in\Pset$ and
$\Delta_{\Pset}\equiv0$; closure is a sufficient condition for coherence, not
a necessary one, the necessary and sufficient condition being (iii) of
Theorem~\ref{thm:coh}. Pointwise bounds, positivity, monotonicity, convexity
and a uniform Lipschitz bound are all preserved by pointwise maximum, so a set
defined by any combination of these is closed under it. Differentiability,
bounds on the second derivative (curvature or tangent-stiffness bounds),
moduli of continuity on the derivative, identities among moduli, and any
other constraint linking the law at different strains are not preserved; for
such sets the envelope $\bar\psi$ is in general not admissible, and these are
the conditions under which $\Delta_{\Pset}>0$ can occur.
\end{corollary}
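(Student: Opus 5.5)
The plan has three parts: show that closure under pointwise maximum puts $\bar\psi$ in $\Pset$, then sort the listed constraints into those preserved by $\max$ and those that are not, and finally feed the first part into Theorem~\ref{thm:coh}.

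\textbf{Step 1: closure gives $\bar\psi\in\Pset$.} I would first show that closure under pointwise maximum places $\bar\psi$ in $\Pset$, using compactness. For each finite set $S_0\subset\Esp$, pick for every $E\in S_0$ a maximiser $\psi_E$ of the continuous evaluation functional $\psi\mapsto\psi(E)$ on the compact $\Pset$. Evaluation at $E$ is the pairing with $\delta_E$, so it is continuous by the hypothesis of Definition~\ref{def:P}. By closure, $\psi_{S_0}=\max_{E\in S_0}\psi_E$ lies in $\Pset$, and it agrees with $\bar\psi$ on $S_0$.

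The net $(\psi_{S_0})$, indexed by finite subsets ordered by inclusion, has a cluster point $\psi^\ast\in\Pset$ by compactness. Continuity of evaluation gives $\psi^\ast(E)=\bar\psi(E)$ for every $E$, so $\bar\psi=\psi^\ast\in\Pset$. This is condition (iii) of Theorem~\ref{thm:coh} with $S=\Esp$, and that theorem then gives $\Delta_\Pset\equiv0$.

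Non-necessity only needs an example: a set whose envelope is attained without the set being lattice-closed. Take $\Pset=\mathrm{conv}\{\psi_1,\psi_2\}$ with $\psi_1\ge\psi_2$ everywhere and $\psi_2\not\le\psi_1$ failing nowhere. The segment has $\bar\psi=\psi_1\in\Pset$, yet the maximum of two interior points need not lie on the segment once a third law is added. The simplest route is a segment with ordered endpoints, where $\max$ of two members is the larger one and hence trivially in the set. For a genuinely non-closed example I would use a triangle $\mathrm{conv}\{\psi_1,\psi_2,\psi_3\}$ with $\psi_1\ge\psi_2,\psi_3$ and $\max(\psi_2,\psi_3)$ outside the hull. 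The envelope is $\psi_1$, which is attained, while closure fails.

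\textbf{Step 2: classify the constraints.} For the preserved list I would verify each property directly for $\max(\psi,\phi)$, and by induction for finite maxima:
\begin{itemize}
\item pointwise bounds $a\le\psi\le b$ are stable under $\max$;
\item positivity is the case $a=0$;
\item a maximum of nondecreasing functions is nondecreasing;
\item a maximum of convex functions is convex, as the supremum of affine minorants;
\item $|\max(\psi,\phi)(E)-\max(\psi,\phi)(E')|\le\max\bigl(|\psi(E)-\psi(E')|,|\phi(E)-\phi(E')|\bigr)$ gives the uniform Lipschitz bound.
\end{itemize}
Intersections of max-closed sets are max-closed, so any combination of these is closed.

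For the non-preserved list, one-dimensional counterexamples suffice:
\begin{itemize}
\item $\max(E,-E)=|E|$ destroys differentiability, and with it any curvature or tangent-stiffness bound and any modulus of continuity on the derivative;
\item identities among moduli (linear relations between values or derivatives at different points) fail for the maximum of two laws each satisfying the relation with different branches active.
\end{itemize}
The closing sentence of the corollary is then the contrapositive of Step 1, together with Theorem~\ref{thm:coh}. When $\bar\psi\notin\Pset$, condition (iii) fails on $S=\Esp$, so some state has $\Delta_\Pset>0$ provided occupation measures can concentrate on the witnessing points. I would therefore state this carefully as ``can occur''.

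\textbf{Main obstacle.} The main obstacle is the compactness passage in Step 1, in a topology where point evaluations must be continuous. I would rely on the pairing continuity assumed in Definition~\ref{def:P}, applied to Dirac measures.
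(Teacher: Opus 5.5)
Your proposal is correct, and your Step~2 is, in substance, the paper's entire proof. The paper verifies that bounds, monotonicity, convexity and the Lipschitz bound survive the maximum, using the same inequality $|\max(a,b)-\max(a',b')|\le\max(|a-a'|,|b-b'|)$. It uses the kink of the maximum of two crossing differentiable laws to rule out differentiability, curvature bounds and moduli of continuity on the derivative, and it says identities among moduli fail ``in general''.

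Where you go beyond the paper is in proving what it treats as evident:
\begin{itemize}
\item Your Step~1 shows that closure under \emph{finite} maxima already forces $\bar\psi\in\Pset$. It does so via maximisers of the point evaluations (continuous as pairings with Dirac masses) and a cluster point of the net indexed by finite subsets of $\Esp$. This is the same finite-subset-plus-compactness device the paper uses for (i)$\Rightarrow$(iii) of Theorem~\ref{thm:coh}. It is what makes the first sentence of the corollary hold under the weaker reading of ``closed''.
\item You supply an actual example for non-necessity, which the paper only asserts.
\item Your caveat that $\Delta_\Pset>0$ actually occurring needs states whose occupation measures charge the offending strains is one the paper's ``can occur'' leaves implicit.
\end{itemize}
For the direction of the closing sentence that your ``contrapositive'' names, the chain is: $\Delta_\Pset(z)>0$ at some state $\Rightarrow$ $\bar\psi\notin\Pset$ by Theorem~\ref{thm:coh} $\Rightarrow$ $\Pset$ is not max-closed by Step~1 $\Rightarrow$ its definition uses a constraint outside the preserved list.

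Two clean-ups are needed in the non-necessity example.
\begin{itemize}
\item The ordered segment is itself closed under maximum, since its members are totally ordered, so it illustrates nothing and should be dropped.
\item The triangle needs at least three strains. On a two-point strain space, suppose $\psi_1$ dominates incomparable $\psi_2,\psi_3$. An increasing affine rescaling of each coordinate gives $\psi_2=(1,0)$, $\psi_3=(0,1)$, $\psi_1=(a,b)$ with $a,b\ge1$. Then $(1,1)$ has barycentric weights $(1,\,b-1,\,a-1)/(a+b-1)\ge0$, so on two strains closure and coherence coincide for convex sets.
\end{itemize}
A concrete instance on three strains is $\psi_1=(1,1,1)$, $\psi_2=(1,0,0)$, $\psi_3=(0,1,0)$. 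The envelope of the hull is $\psi_1$, which is attained. However, $\max(\psi_2,\psi_3)=(1,1,0)$ lies outside the hull: a vanishing third coordinate forces zero weight on $\psi_1$, and then the first two coordinates sum to $1$, not $2$.
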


\begin{proof}
The maximum of two convex functions is convex, of two monotone functions
monotone, of two functions bounded by $b$ bounded by $b$, and of two
$L$-Lipschitz functions $L$-Lipschitz, since
$|\max(a,b)-\max(a',b')|\le\max(|a-a'|,|b-b'|)$. The maximum of two
differentiable functions has a kink where they cross with distinct slopes, so
differentiability, any bound on the second derivative and any modulus of
continuity on the first derivative fail there; the maximum of two functions
satisfying a linear identity among their moduli does not in general satisfy
it.
\end{proof}

The physical content is this. The gap appears when the constitutive knowledge
couples the law at different strains, so that being the worst at one strain
constrains how bad the same material can be at another. Bounds and shape
conditions do not couple strains, and the pointwise-worst practice is then
exact. A curvature bound couples strains, because a law that is worst at two
neighbouring strains would have to bend too sharply between them, and the
pointwise-worst practice is then conservative. The coupling among strains that
makes the coherent worst case differ from the pointwise worst case is
therefore not assumed; it is a consequence of the
regularity and structural components of Definition~\ref{def:P}. The
pointwise-worst practice is exact when constitutive knowledge consists only of
bounds and shape conditions, and becomes conservative by exactly
$\Delta_{\Pset}$ once the smoothness class or the structure of the material is
known.

\subsection{Mechanics-driven information acquisition}\label{sec:info}

Once the structural problem has identified the constitutive uncertainty that
matters, the next question is which experiment to perform. Not every
experiment is useful. An experiment may probe a region of strain space where
the admissible laws disagree strongly but which the structure never visits;
it then contracts the admissible set without changing anything the structure
can see. The right measure of an experiment is not how much constitutive
uncertainty it removes but how much of the structurally relevant uncertainty
it removes, and this subsection makes that measure precise.

A measurement $x$ with outcome $y$ constrains the law to the set
$\Cmeas(x,y)$ of laws that would have produced that outcome within the
measurement tolerance. Throughout, $\Cmeas(x,y)$ is assumed closed and convex
in $\Vsp$. This is an assumption on the measurement model, not a property of
arbitrary experiments; it holds for the bounded-error observation models used
in the studies, in which the observed quantity --- an energy, a stress or a
tangent modulus at a prescribed strain, or a modulus itself --- is affine in
$\psi$ and the tolerance is an interval, so that $\Cmeas(x,y)$ is a slab.
Constitutive knowledge is then updated by contraction,
\begin{equation}\label{eq:update}
  \Pset_{k+1}=\Pset_k\cap\Cmeas(x,y),
  \qquad
  \sigma_{\Pset_{k+1}}=\overline{\sigma_{\Pset_k}\infconv\sigma_{\Cmeas(x,y)}} .
\end{equation}
The second identity is the classical inf-convolution rule for support
functions of intersections; the inf-convolution of two functions at a point
is the smallest value obtainable by splitting the point between them. Its
mechanical reading is that a measurement acts on the set energy of a state by
splitting the occupation measure between prior knowledge and the measurement,
each bounding the part on which it is tighter. The same support function that
gives the worst-case energy of a state is the object that a measurement
updates: constitutive inference and mechanics operate on the same object.

Two questions are then natural, and the answers differ. Can a measurement be
guaranteed to lower the worst-case prediction? No, for a reason that is
obvious once stated: the worst admissible material may be the true one, and
no measurement can exclude the truth. Can a measurement be guaranteed to
narrow the range of predictions the evidence permits? Yes, and the condition
is that the measurement distinguishes the two extreme admissible materials
from each other by more than its tolerance.

\begin{proposition}[The one-sided guarantee is vacuous]\label{prop:vacuous}
Let the outcome of measurement $x$ be a function $Y_x$ of the material,
possibly with tolerance, so that $\psi\in\Cmeas(x,Y_x(\psi))$ for every
admissible $\psi$, and let the outcome set $Y(x)$ contain $Y_x(\Pset)$. Then
\[
  \WW(\Pset)-\max_{y\in Y(x)}\WW\bigl(\Pset\cap\Cmeas(x,y)\bigr)\;=\;0
  \qquad\text{for every }x .
\]
\end{proposition}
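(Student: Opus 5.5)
The plan is to prove the two inequalities
\[
\max_{y\in Y(x)}\WW\bigl(\Pset\cap\Cmeas(x,y)\bigr)\le\WW(\Pset)
\qquad\text{and}\qquad
\max_{y\in Y(x)}\WW\bigl(\Pset\cap\Cmeas(x,y)\bigr)\ge\WW(\Pset),
\]
with the convention $\WW(\emptyset)=-\infty$ for outcomes that empty the set.

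The upper inequality is monotonicity of a supremum under restriction. For every $y$, $\Pset\cap\Cmeas(x,y)\subseteq\Pset$. Hence $\WW(\Pset\cap\Cmeas(x,y))=\sup_{\Pset\cap\Cmeas(x,y)}f\le\sup_\Pset f=\WW(\Pset)$. Taking the maximum over $y\in Y(x)$ preserves the inequality.

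The lower inequality uses the attainment part of Theorem~\ref{thm:value}(iii). It gives a worst-case law $\psi^{\star}\in\Pset$ with $f(\psi^{\star})=\WW(\Pset)$. Set $y^{\star}=Y_x(\psi^{\star})$. By hypothesis $y^{\star}\in Y_x(\Pset)\subseteq Y(x)$, so $y^{\star}$ is an admissible outcome. Also by hypothesis $\psi^{\star}\in\Cmeas(x,y^{\star})$, so $\psi^{\star}\in\Pset\cap\Cmeas(x,y^{\star})$. It follows that
\[
\WW\bigl(\Pset\cap\Cmeas(x,y^{\star})\bigr)\ge f(\psi^{\star})=\WW(\Pset),
\]
and therefore the maximum over $y$ is at least $\WW(\Pset)$. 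This is the mechanical reading stated before the proposition: the outcome produced by the worst material itself cannot exclude that material.

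The only delicate step is whether "$\max$" is attained, since a supremum over an outcome set alone would not obviously be attained. The argument above settles this. The value $\WW(\Pset)$ is an upper bound for every term and is reached at $y^{\star}$, so the supremum over $y$ is a maximum, attained at $y^{\star}$. For this it matters that $\WW(\Pset)$ is attained, which is why Theorem~\ref{thm:value}(iii) is invoked rather than an approximating sequence. With a sequence $\psi_n$ one would only get the supremum, not the maximum.

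Convexity and closedness of $\Cmeas$ play no role here. They matter only for the intersected set to stay in the class on which the rest of the theory operates. The difference in the statement is therefore exactly zero for every $x$.
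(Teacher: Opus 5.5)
Your proof is correct and matches the paper's argument: you take a maximiser $\psi^{\star}$ attaining $\WW(\Pset)$ (Theorem~\ref{thm:value}(iii)), observe that $\psi^{\star}\in\Pset\cap\Cmeas(x,Y_x(\psi^{\star}))$ so that this intersection has robust value at least $\WW(\Pset)$, and use inclusion for the reverse inequality. Your remarks that the maximum over $y$ is attained at $y^{\star}$ and on the convention $\WW(\emptyset)=-\infty$ are accurate details that the paper leaves implicit.
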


\begin{proof}
Let $\psi^{\star}$ attain $\WW(\Pset)$ and $y^{\star}=Y_x(\psi^{\star})$. Then
$\psi^{\star}\in\Pset\cap\Cmeas(x,y^{\star})$, so the robust value of the
intersection is at least $f(\psi^{\star})=\WW(\Pset)$; inclusion gives the
reverse inequality.
\end{proof}

The worst admissible material may be the true one, and no measurement
consistent with the admissible set is guaranteed to exclude it. A guarantee
must therefore be sought on a quantity that no single outcome can preserve.

\begin{theorem}[Guaranteed contraction of the response interval]\label{thm:gain}
Let $|\Rint(\Pset)|=\WW(\Pset)-\inf_\Pset f$ be the width of the response
interval of \eqref{eq:vals}, and
\[
  G_{\Rint}(x)=|\Rint(\Pset)|-\sup_{y\in Y(x)}\bigl|\Rint\bigl(\Pset\cap\Cmeas(x,y)\bigr)\bigr| .
\]
Then $G_{\Rint}(x)\ge0$, and $G_{\Rint}(x)>0$ only if no outcome set
$\Cmeas(x,y)$ contains both a maximiser and a minimiser of $f$ on $\Pset$:
a measurement has guaranteed value only if it discriminates between the
extreme admissible materials beyond its tolerance. Conversely, if the extreme
materials are unique and the supremum in the definition of $G_{\Rint}$ is
attained --- as it is when $Y(x)$ is finite, or compact with
$y\mapsto|\Rint(\Pset\cap\Cmeas(x,y))|$ upper semicontinuous --- then this
condition is also sufficient.
\end{theorem}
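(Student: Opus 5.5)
The theorem has three parts: nonnegativity, a necessary condition, and a conditional converse. All three follow from comparing the width on each outcome slice $\Pset_y:=\Pset\cap\Cmeas(x,y)$ with the width on $\Pset$. Throughout I use the standing hypothesis of Proposition~\ref{prop:vacuous}: every admissible $\psi$ lies in $\Cmeas(x,Y_x(\psi))$ with $Y_x(\psi)\in Y(x)$. Consequently every admissible material, and in particular each extreme one, appears in at least one slice.

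\emph{Nonnegativity.} Since $\Pset_y\subseteq\Pset$, we have $\WW(\Pset_y)\le\WW(\Pset)$ and $\inf_{\Pset_y}f\ge\inf_\Pset f$. Hence $|\Rint(\Pset_y)|\le|\Rint(\Pset)|$ for every nonempty slice. Taking the supremum over $y$ gives $G_{\Rint}(x)\ge0$. I will adopt the convention that an empty slice has width $0$, or equivalently that $Y(x)$ is restricted to outcomes with nonempty slice. Either way the supremum is over a nonempty family, because $Y_x(\Pset)\subseteq Y(x)$.

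\emph{Necessity.} I argue by contraposition. Suppose some outcome $y_0$ has $\Cmeas(x,y_0)$ containing both a maximiser $\psi^+$ and a minimiser $\psi^-$ of $f$ on $\Pset$; both exist by Theorem~\ref{thm:value}(iii). Then $\psi^\pm\in\Pset_{y_0}$, so $\WW(\Pset_{y_0})\ge f(\psi^+)=\WW(\Pset)$ and $\inf_{\Pset_{y_0}}f\le f(\psi^-)=\inf_\Pset f$. Combined with the monotonicity above, this gives $|\Rint(\Pset_{y_0})|=|\Rint(\Pset)|$, so the supremum equals $|\Rint(\Pset)|$ and $G_{\Rint}(x)=0$. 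Thus $G_{\Rint}(x)>0$ forces every outcome to separate the maximiser set from the minimiser set, which is exactly the stated condition.

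\emph{Sufficiency.} Assume the maximiser $\psi^+$ and minimiser $\psi^-$ are unique, that no slice contains both, and that the supremum is attained at some $y^\ast$. I will show $|\Rint(\Pset_{y^\ast})|<|\Rint(\Pset)|$, which gives $G_{\Rint}(x)>0$. The slice $\Pset_{y^\ast}$ is compact and convex, since it is a closed convex subset of the compact $\Pset$, so both extrema of $f$ over it are attained, using Theorem~\ref{thm:value}(iii) applied to the subset. By hypothesis $\Pset_{y^\ast}$ omits $\psi^+$ or omits $\psi^-$. If it omits $\psi^+$, uniqueness means no point of $\Pset_{y^\ast}$ has value $\WW(\Pset)$, and attainment then gives $\WW(\Pset_{y^\ast})<\WW(\Pset)$. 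The case where it omits $\psi^-$ is symmetric and gives $\inf_{\Pset_{y^\ast}}f>\inf_\Pset f$. In either case the width drops strictly.

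The main obstacle is this attainment step. Without attainment of the sup over $y$, one could have a sequence of slices, each missing $\psi^+$, whose maxima still tend to $\WW(\Pset)$, so the strict drop would be lost in the limit. That is why the theorem assumes attainment, and for the compact case I only need to note that an upper semicontinuous width function on compact $Y(x)$ attains its supremum. I should also check that attainment of $\inf f$ on a slice does not need the slice to have extreme points in $\Pset$; Theorem~\ref{thm:value}(iii) applied to the compact convex slice itself covers this.
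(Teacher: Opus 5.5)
Your proposal is correct and follows essentially the same route as the paper's proof: monotonicity of both endpoints under inclusion gives $G_{\Rint}(x)\ge0$; a slice containing both a maximiser and a minimiser reproduces the full width, which gives necessity; and for sufficiency, attainment of the supremum at a single outcome, together with compactness of that slice and uniqueness of the extreme material it omits, forces a strict drop in width. Your explicit treatment of the minimiser case and of empty slices, via the standing hypothesis of Proposition~\ref{prop:vacuous}, only fills in details the paper leaves implicit.
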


\begin{proof}
Both endpoints are monotone under inclusion, so $|\Rint|$ is. If some
$\Cmeas(x,y)$ contains a maximiser $\psi^{\star}$ and a minimiser $\psi_{\star}$,
the intersection attains both endpoints and $|\Rint|$ is unchanged. For the
converse, let the supremum be attained at $y_0$. By hypothesis
$\Pset\cap\Cmeas(x,y_0)$ excludes one of the two extreme materials, say
$\psi^{\star}$. The upper endpoint of the intersection is attained on the
compact set $\Pset\cap\Cmeas(x,y_0)$ by Theorem~\ref{thm:value}(iii) applied to
it, at a material that is not $\psi^{\star}$; uniqueness of the maximiser then
gives a strictly smaller value, so the width at $y_0$ is strictly smaller than
$|\Rint(\Pset)|$, and this width is the supremum. Attainment of the supremum is
what the argument uses: without it, a sequence of outcomes each excluding
$\psi^{\star}$ by less and less can drive the width back to $|\Rint(\Pset)|$
without any single outcome achieving it.
\end{proof}

The acquisition rule is
\begin{equation}\label{eq:acq}
  x^{\star}\in\arg\max_x G_{\Rint}(x),
\end{equation}
and it is mechanical in two ways: the extreme materials $\psi^{\star}$,
$\psi_{\star}$ are determined by the mechanics through the occupation measures
of the states it visits, and the discrimination that decides the guarantee is
the difference between what those two materials predict at the measurement.
Computing $G_{\Rint}$ exactly requires both endpoints of the interval on each
candidate intersection; the upper endpoint is witness-certified, and the lower
endpoint is the concave minimisation of Theorem~\ref{thm:value}(iii),
enumerable or bounded in the visible quotient. The cost of guaranteed
acquisition is the cost of the lower endpoint, and no cheaper certificate of a
positive gain follows from the witnesses alone. Two quantities must therefore
be kept apart. $G_{\Rint}(x)$ is guaranteed: it is the contraction that every
outcome of $x$ produces at least. The contraction obtained when one particular
outcome is imposed --- the outcome of a hidden material, as in Studies~3
and~5 --- is realized for that outcome only; it may exceed $G_{\Rint}(x)$ by
any amount while $G_{\Rint}(x)$ itself is zero. In plain terms: the
guaranteed contraction is what one can promise before the experiment is
done; the realized contraction is what one observes afterwards, for the
outcome that actually occurred.

\section{Numerical studies}\label{sec:studies}

The five studies exercise the layer of Sections~\ref{sec:csm} and~\ref{sec:ccc}
on mechanics of increasing complexity. Nothing in the layer changes between
them; what changes is the oracle, the state space and the constitutive class.
Each study is introduced by the question it answers and the reason the
example was chosen, and each closes with what is and is not being claimed.

\subsection{Study 1: linear elasticity}\label{sec:ex1}

\emph{The question.} When the constitutive uncertainty is a set of elastic
moduli, does CSM reproduce the classical worst-case compliance computation,
and what does it add? This is the case in which the framework reduces to a
known construction, and it is included to exhibit that reduction and to
introduce the objects of Sections~\ref{sec:csm} and~\ref{sec:ccc} on the
simplest possible mechanics, where every quantity can be checked by hand. The
mechanics is a planar truss, statically indeterminate to degree two, with
thirteen members of unit area, pinned at one support and on a roller at the
other, loaded at the top-middle node. Two members meet at an unloaded free node
and carry no force under any material: the loading programme does not
interrogate them. The constitutive knowledge set is
\[
  \Pset=\bigl\{E\in\mathbb{R}^{13}:\ E_{\mathrm{lo}}\le E\le E_{\mathrm{hi}},\
  \bigl|\tfrac{1}{13}\textstyle\sum_e E_e-1\bigr|\le0.03\bigr\},
\]
per-member bounds $[0.8,1.2]$ (one loaded member $[0.75,1.25]$, the two
uninterrogated members $[0.6,1.4]$) together with a batch-mean certification: the
members come from one batch whose mean modulus is known to $\pm3\%$. The
batch constraint is an instance of $\Pset_{\mathrm{struct}}$ in
Definition~\ref{def:P}, and it is the only coupling among members.

\paragraph{Pairing.}
$\Pi(u;E)=\tfrac12u^{\!\top}K(E)u-f^{\!\top}u$ with $K(E)=\sum_e E_eK_e$; the
occupation measure is the vector of member strain energies per unit modulus,
$\occ_e(u)=\tfrac12u^{\!\top}K_eu\ge0$. The value function is
$f(E)=-\tfrac12 c(E)$ with $c$ the compliance, so
\[
  \Rint(\Pset)=\bigl[-\tfrac12\max_\Pset c,\ -\tfrac12\min_\Pset c\bigr]:
\]
the witness-certified endpoint $\WW(\Pset)$ is the \emph{stiffest} coherent
material, and the classical worst-case compliance $\max_\Pset c$ is the lower
endpoint, a concave minimisation. By Proposition~\ref{prop:comp}, the
complementary principle --- member forces $s$ with $B^{\!\top}s=f$ and
$\Pi^{c}(s;C)=\tfrac12\sum_eC_e(L_e/A_e)s_e^{2}$ in the compliances $C_e=1/E_e$
--- has value $f^{c}=\tfrac12c$ and stress-occupation measure
$\occ^{c}_e(s)=\tfrac12(L_e/A_e)s_e^{2}$, so its witness-certified endpoint is
the classical worst-case compliance directly.

\paragraph{Results.}

\begin{center}\footnotesize
\begin{tabular}{@{}lr@{}}
\toprule
$\WW(\Pset)$, witness loop; solves to relative width $10^{-3}/10^{-5}/10^{-7}$ & $-1.60366$;\ 4 / 8 / 11 \\
\quad active witnesses in the dual certificate; $n_{\mathrm{vis}}$ of 13 & 2;\ 8 \\
\quad uninterrogated members identified as invisible; value change under their motion & yes;\ $1.3\times10^{-15}$ \\
$\inf_\Pset f$, vertex enumeration in the visible quotient (2904 candidates) & $-2.30621$ \\
\quad classical worst-case compliance $\max_\Pset c=-2\inf_\Pset f$ & $4.61242$ \\
\quad 22\,119 admissible random samples all inside the certified interval & yes \\
\midrule
Coherence gap, stiff end: $f(E_{\mathrm{hi}})-\WW(\Pset)$, $E_{\mathrm{hi}}\notin\Pset$ & $4.26\times10^{-2}$\ (2.66\%) \\
Coherence gap, soft end: $c(E_{\mathrm{lo}})-\max_\Pset c$, $E_{\mathrm{lo}}\notin\Pset$ & $1.08\times10^{-1}$\ (2.33\%) \\
Box-only set (lattice): $\WW=f(E_{\mathrm{hi}})$ exactly, 2 solves & yes \\
\midrule
Complementary principle, box-only set: $\sup f^{c}=\tfrac12c(E_{\mathrm{lo}})$, 2 solves, width 0 & yes \\
Complementary principle, coupled set, hull of mapped vertices: $\sup f^{c}$ vs $\tfrac12\max_\Pset c$ & equal to $10^{-15}$ \\
\midrule
Value-criterion gain $G_{\WW}(e)$, every member & $0$ \\
Interval-criterion gain $G_{\Rint}(e)$: best member (most loaded) & member 10:\ $0.121$ \\
\quad member chosen by parameter width (uninterrogated) & member 12:\ $0.060$ \\
\quad $G_{\Rint}>0$ only where $|E^{\star}_e-E_{\star,e}|>2\cdot$tolerance & yes \\
\bottomrule
\end{tabular}
\end{center}

\begin{figure}[t]\centering
\includegraphics[width=\textwidth]{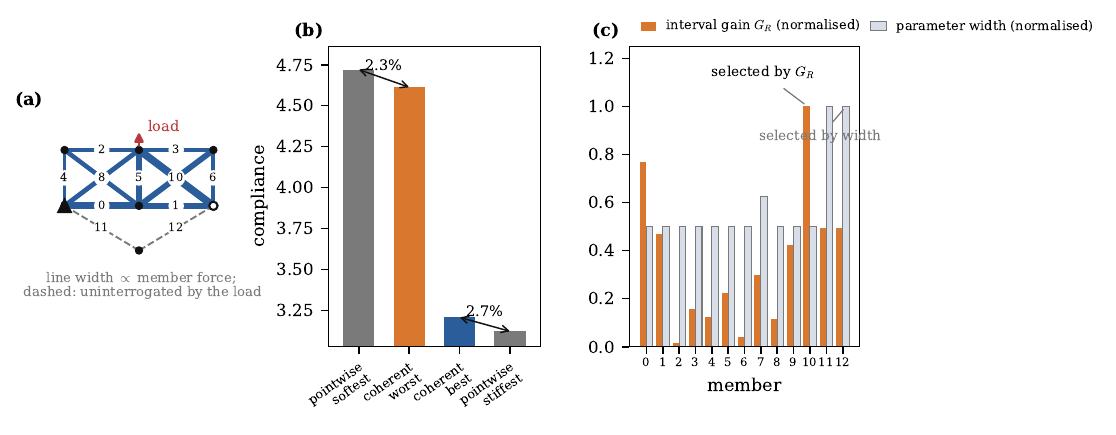}
\caption{Linear elasticity. (a) The truss; line width is proportional to the
member force under the load, and the two members meeting at the unloaded node
carry none. (b) Compliance bounds: the pointwise-softest and pointwise-stiffest
materials lie outside the admissible set, and the coherent worst and best
cases are tighter by the coherence gaps at the two ends. (c) Guaranteed
contraction of the response interval per candidate modulus measurement, against
the parameter width that a model-agnostic criterion would use.}
\label{fig:ex1}
\end{figure}

\paragraph{Reading.}
With per-member bounds alone the set is a lattice, the worst material is the
pointwise-worst material, and CSM reproduces the classical bound in two solves
with no gap. The batch-mean certification is one physically meaningful
constraint, and it is enough to make the pointwise-worst material inadmissible
at both ends: the classical worst-case compliance overstates the coherent worst
case by $2.3\%$, and the stiffest coherent material lies $2.7\%$ below the
pointwise-stiffest one. These are the coherence gaps of
Theorem~\ref{thm:coh} (Figure~\ref{fig:ex1}b), produced by Corollary~\ref{cor:lattice}'s structural
component and not by an assumed budget. The complementary principle certifies
the classical worst-case compliance in two solves for the box, and on the hull
of the mapped vertices for the coupled set, where in this instance the hull
certificate is exact. The uninterrogated members are invisible to the
mechanics, yet the interval gain of measuring them is positive because the
batch constraint makes them budget partners of the loaded members; the gain of
measuring the most heavily loaded member is twice as large, and the parameter
width --- the criterion that ignores the mechanics --- selects the wrong
member. The value-criterion gain is identically zero, as
Proposition~\ref{prop:vacuous} requires.

\subsection{Study 2: a constitutive function constrained by data}\label{sec:ex2}

\emph{The question.} When the unknown is a constitutive \emph{function}
rather than a vector of moduli, does knowing something about its smoothness
change the worst-case prediction? This is the study in which the regularity
component of Definition~\ref{def:P} can act, and the example is chosen so that
the mechanics is exact and the effect of a single certified curvature bound
can be isolated. The mechanics is a tapered bar of five segments of equal length, with
cross-sections chosen so that a single equilibrium state visits the five
distinct strains $S=\{0.034,0.038,0.044,0.052,0.065\}$. The constitutive input
is a nonlinear stored energy $\psi(\varepsilon)$, represented by its nodal
values $\psi_i$ on a strain grid of spacing $h=10^{-3}$ over the operating
interval $[0,0.08]$, with piecewise-linear interpolation between knots; the
law is convex and piecewise linear, so the finite-dimensional mechanics
problem is a linear programme, solved to the stated optimisation tolerances,
and the oracle is global. Derivatives of the law are
accordingly divided differences on the knot grid: $\psi'$ is the slope of each
segment, and $\psi''$ denotes throughout the second divided difference
$D_h^{2}\psi_i=(\psi_{i-1}-2\psi_i+\psi_{i+1})/h^{2}$ at each interior knot,
the jump in slope across the knot divided by $h$. Convexity of the
piecewise-linear law is $D_h^{2}\psi\ge0$ at every interior knot, and a
curvature bound $\psi''\le\kappa$ is $D_h^{2}\psi\le\kappa$ there, the discrete
counterpart of the bound for the smooth law it approximates. The
occupation measure of a state is $\occ(u)=\sum_eA_eL_e\,\delta_{\varepsilon_e(u)}$.
The closure that Corollary~\ref{cor:lattice} requires is verified on the
representation actually used. The visited strains $S$ and the
energy-observation strains are knots of the grid, and every constraint
defining $\Pset_A$ --- the normalisation, nonnegative first and second divided
differences, and bounds on nodal values --- is preserved by the knot-wise
maximum of two admissible nodal vectors. $\Pset_A$ is therefore closed under
that maximum, the envelope on $S$ is attained by a member of $\Pset_A$, and
Theorem~\ref{thm:coh}(iii) applies on $S$. This is closure of the nodal class
under the knot-wise maximum, which is what the theorem requires on $S$;
closure of the fixed-grid class under the pointwise maximum in function
space, which can place a breakpoint between knots, is neither needed nor
asserted.

\paragraph{Constitutive knowledge.}
The set is built from a stated synthetic measurement model on a reference law
$\psi_0=50\varepsilon^{2}+400\varepsilon^{3}$: energy observations at seven
strains with per-test relative uncertainties between $3\%$ and $12\%$, and
tangent-stiffness observations of $E_t=\psi_0''$ at five strains with $4\%$
Gaussian noise, which enter only through the envelope $\kappa_z$ below. Then
\[
  \Pset_A=\Pset_{\mathrm{exp}}\cap\Pset_{\mathrm{thermo}},
  \quad
  \Pset_B=\Pset_A\cap\{D_h^{2}\psi\le\kappa_z\ \text{at every interior knot}\},
  \quad
  \kappa_z=\max_i\bigl(\hat E_{t,i}+z\,\sigma_i\bigr),
\]
with $\Pset_{\mathrm{thermo}}=\{\psi_0=0,\ \psi'\ge0,\ D_h^{2}\psi\ge0\}$ and
$\kappa_z$ the upper confidence envelope of the measured tangent modulus at
level $z$ ($95\%$, $99\%$, $99.9\%$). The regularity constraint is therefore
certified from the tangent-stiffness experiment and is part of the
constitutive knowledge; it is not a coupling chosen to produce an effect. The
reference law is admissible in every case.

\paragraph{Results.}

\begin{center}\small
\begin{tabular}{@{}lr@{}}
\toprule
Case A, $\Pset_A$ (lattice-closed): coherence gap $\Delta_{\Pset}$ at the reference state & $0$ exactly \\
Case B, $\kappa_{95}=304$: $\Delta_{\Pset}$ (of the set energy); binding curvature rows & $1.97\%$;\ 35 \\
Case B, $\kappa_{99}=312$ & $2.01\%$ \\
Case B, $\kappa_{99.9}=320$ & $2.03\%$ \\
$\Delta_{\Pset}$ at $1.5\kappa_{95}$ / $3\kappa_{95}$ / $10\kappa_{95}$ / $50\kappa_{95}$ & $2.15\%$ / $1.16\%$ / $0.18\%$ / $0$ \\
Identity $\Delta_{\Pset}=\int\bar\psi\,d\occ-\sig(\occ)$ & to $3\times10^{-17}$ \\
\midrule
$\WW(\Pset_B)$, witness loop: width, solves, active witnesses, $n_{\mathrm{vis}}$ of 81 & $4\times10^{-15}$, 12, 2, 12 \\
$\WW(\Pset_A)\ge\WW(\Pset_B)\ge f(\psi_0)$ & $-0.2904\ge-0.2927\ge-0.3025$ \\
\bottomrule
\end{tabular}
\end{center}

\begin{figure}[t]\centering
\includegraphics[width=\textwidth]{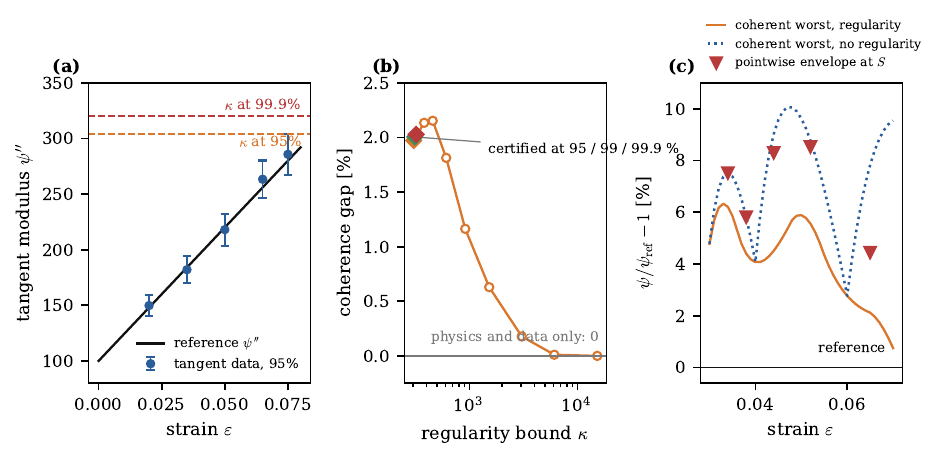}
\caption{Constitutive function constrained by data. (a) Tangent-stiffness
observations with their $95\%$ intervals, the reference tangent modulus, and
the certified regularity envelopes. (b) Coherence gap at the reference state
as the regularity bound is relaxed from the certified levels to infinity.
(c) Deviation from the reference law of the coherent worst material with and
without the certified regularity, and of the pointwise envelope at the five
strains the bar visits.}
\label{fig:ex2}
\end{figure}

\paragraph{Reading.}
With energy data and thermodynamic admissibility alone the pointwise envelope
$\bar\psi$ is itself an admissible law, and the coherent worst case equals the
pointwise worst case exactly, as Corollary~\ref{cor:lattice} requires. Adding
the certified tangent-modulus envelope changes nothing pointwise --- the
reference law and every band remain as they were --- and makes the pointwise
envelope inadmissible: the coherent worst material is a maximal-curvature law
bridging the tight data points, and it lies $2\%$ below the envelope on the
strains the bar visits (Figure~\ref{fig:ex2}c). The effect is insensitive to the confidence level at
which the regularity is certified (Figure~\ref{fig:ex2}b), and it disappears only when the admissible
tangent modulus is allowed to exceed the measured envelope by an order of
magnitude, that is, when the tangent-stiffness information is discarded. This
is the phenomenon of Theorem~\ref{thm:coh} arising from a defensible
constitutive knowledge set, and it is what Study~1, with its
lattice-closed parameter bounds, cannot exhibit except through structural
coupling. The witness certificate on $\Pset_B$ closes in twelve solves with
two active witnesses in a visible quotient of dimension twelve out of
eighty-one nodal parameters.

The nodal representation is one parameterisation of $\Pset_{\mathrm{exp}}$;
a spline, a polynomial basis, a Gaussian-process posterior or a neural
constitutive family parameterise the same knowledge, with
Proposition~\ref{prop:conv} governing the nonconvex cases, and nothing in
Sections~\ref{sec:csm} and~\ref{sec:ccc} changes.

\subsection{Study 3: isotropic phase-field fracture}\label{sec:s3}

\emph{The question.} Can the same constitutive-set framework still provide
useful certificates when the state space is nonconvex, damage evolves
irreversibly, and the strains the structure visits change from increment to
increment as a crack grows? Phase-field fracture is chosen because it has all
three features at once, and because its standard solver, alternate
minimisation, is a local one, so that the distinction between the two halves
of the bracket is tested rather than assumed. The study distinguishes three
levels that the theory keeps separate:
incremental certification (Theorem~\ref{thm:cert}, exact), trajectory
certification on a declared $\psi$-independent class (Section~\ref{sec:affine},
exact), and certification of the trajectory a sequential local solver produces
(not within the theory).

\paragraph{Modelling assumptions.}
AT2 phase-field fracture on a unit square, $12\times12$ structured triangles,
$\ell=2h$, $G_c=0.04$, residual stiffness $10^{-4}$, an initial crack modelled
by $d=1$ on the mid-row up to half width. Energy density $g(d)\,\psi(r)$ with
$r=\|\varepsilon\|_F$ and $g=(1-d)^2+k_{\rm res}$; no tensile--compressive split
(monotone tension). Displacement control, ten monotone increments
$\lambda_1<\dots<\lambda_{10}$ spanning elastic loading, softening, crack
propagation and post-fracture. State $z_k=(u_k,d_k)$ with $d_k\ge d_{k-1}$.
The constitutive set is $\Pset_B$ of Study~2, \emph{inherited unchanged}
(254 rows on 81 nodal values, tangent regularity certified at $95\%$), extended
beyond the last knot by its last slope. The occupation measure is
$\occ_k=\sum_eA_e\,g(d_e)\,[\text{hat weights at }r_e]$, nonnegative and linear
in the nodal values, so $\Pi_k(z;\psi)=\langle\psi,\occ_k(z)\rangle+\Pi_{\rm frac}(d)$
without modification of the CSM layer. For a fixed material, the displacement
subproblem is a second-order cone programme (exact for the piecewise-linear
law), the damage subproblem a convex quadratic programme with the obstacle, and
alternate minimisation from $d_{k-1}$ is the sequential local solver.
Increment problems are posed on the damage reached by the reference law at the
previous increment.

\begin{figure}[t]\centering
\includegraphics[width=\textwidth]{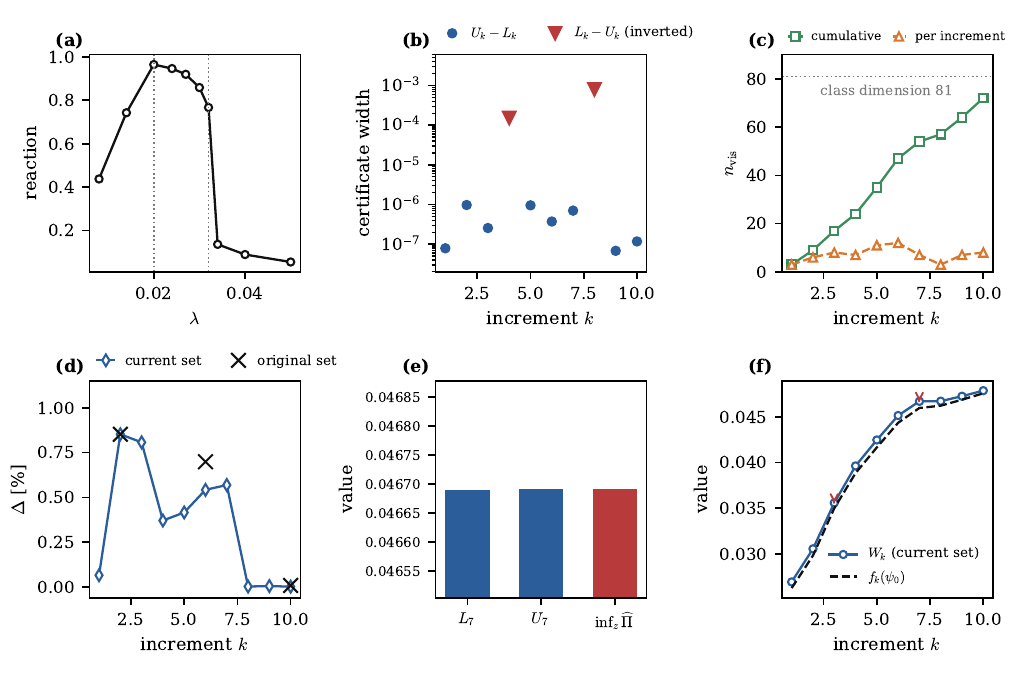}
\caption{Study 3. (a) Reaction versus prescribed displacement for the reference
law (illustrative); dotted lines mark the information updates. (b) Bracket
width $U_k-L_k$ per increment, $L_k$ solver-realized; inverted brackets at
$k=4$ and $k=8$ are detections of a non-global state returned by the local
solver. (c) Mechanically visible constitutive
dimension, per increment and cumulative, against the class dimension.
(d) Coherence gap at the reference state of each increment, on the current set
and on the original set. (e) Mixed-state bracket and pure-state minimum of
the set energy at $k=7$. (f) Upper certificate $U_k$ along the loading history
with the two set contractions.}
\label{fig:s3}
\end{figure}

\begin{figure}[t]\centering
\includegraphics[width=0.9\textwidth]{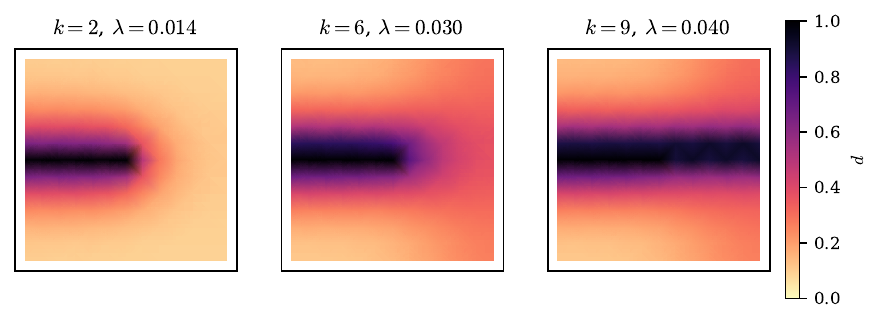}
\caption{Phase field of the reference law before initiation, during
propagation, and after fracture; illustrative, not a certified crack path.}
\label{fig:s3d}
\end{figure}

\Needspace{16\baselineskip}
\paragraph{Results.}

\begin{center}\footnotesize
\begin{tabular}{@{}p{0.66\textwidth}r@{}}
\toprule
Increments; mechanics solves for incremental certification (18 more for the two updates) & 10;\ 72 \\
Bracket width $U_k-L_k$ at closed increments, $L_k$ solver-realized (tolerance $10^{-6}$) & $10^{-7}$--$10^{-6}$ \\
Active witnesses per increment; $n_{\rm vis}$ per increment; cumulative at $k=10$ & 1--4;\ 3--12;\ 72 of 81 \\
Inverted brackets, $k=4$ and $k=8$ (non-global states of the local solver) & $1.5\times10^{-4}$, $8.0\times10^{-4}$ \\
Coherence gap, original set: before initiation $k=2$ / initiation $k=6$ / after $k=10$ & $0.85\%$ / $0.70\%$ / $0.01\%$ \\
Pure-state minimum of the set energy against the bracket at $k=7$, $V_{\rm pure}-U_7$ & $-5\times10^{-7}$ \\
Information updates $\mathsf W_3$, $\mathsf W_7$ after one energy measurement each & $-0.13\%$, $-0.12\%$ \\
Trajectory lift on the declared class: width, path solves, active witnesses & $3\times10^{-7}$, 8, 2 \\
Sequential minus declared-class trajectory value over the eight materials visited & $7\times10^{-6}$ to $4\times10^{-5}$ \\
\bottomrule
\end{tabular}
\end{center}

\paragraph{Reading.}
\emph{Certification.} The same witness loop, support function and
occupation measure that certified the truss and the bar furnish an upper
certificate $U_k$ at every increment of a propagating crack. Because the
oracle is alternate minimisation, $L_k$ is solver-realized: at the increments
where no inversion occurred the closed bracket is consistent with globality
without establishing it. The certificates have a cardinality that never
exceeds four in a visible quotient that grows from 3 to 72 of 81 as the crack
develops and new strains are visited (Figure~\ref{fig:s3}c). The interrogated
dimension is shaped by the mechanical history and plateaus once the crack is
through.

\emph{The certificate as a detector.} At $k=4$ (near peak load) and $k=8$ (crack-through)
the bracket inverted: $L_k>U_k$. Since $U_k$ is valid for any oracle by
Theorem~\ref{thm:cert}(i), an inversion is a certificate that the sequential
local solver returned a non-global state at one of the witness materials. The
theory predicted that under a local oracle the upper bound survives and the
lower bound becomes reachability-relative; the fracture problem exhibited this
at exactly the two increments where the energy landscape is most nonconvex,
and the certificate detected it without knowledge of the true minimum.

\emph{Coherence.} The inherited regularity produces a coherence gap of
$0.85\%$ before initiation and $0.70\%$ at initiation, and the gap vanishes
after fracture because the degraded elements carry negligible occupation
weight and the surviving strains no longer span the range on which the
regularity bound binds (Figure~\ref{fig:s3}d). The same constitutive set
yields a coherence effect whose size is dictated by the mechanical state.

\emph{Pure and mixed states.} At $k=7$ a cutting-plane minimisation of the
set energy over pure states, started from the witness state, reached
$0.046691$ against the bracket $[0.046690,0.046691]$, consistent with
equality in \eqref{eq:weak}. The state
space is nonconvex and no strict pure/mixed gap was exhibited. Because the
pure-state search is itself local, this is evidence of absence at the scale
tested, not a proof.

\emph{Information across increments.} At $k=3$ and $k=7$ the discrimination
between the witness maximiser and the softest coherent material seen by the
witness state selected an energy measurement (necessary condition of
Theorem~\ref{thm:gain}, not a certified gain); each contracted the set,
lowered the upper certificate by $0.13\%$ and $0.12\%$, and the bracket re-closed
on the contracted set, which then carried the remaining increments
(Figure~\ref{fig:s3}f). These are realized contractions for the outcome
observed, not guaranteed gains. The mechanics solver and the CSM layer were not
modified.

\emph{Path-level certificate.} On the declared class of irreversibility-admissible
trajectories, the path-level witness loop closed in eight path solves with two
active witnesses; the path oracle is the same alternate-minimisation solver,
so the upper certificate is rigorous and the lower value solver-realized, as
at the increment level. The sequential trajectory lay above the declared-class
trajectory by $7\times10^{-6}$ to $4\times10^{-5}$ at every material visited:
nearly coincident under monotone loading, not identical. This agreement is
reported for this problem and is not generalised.

\emph{What is not claimed.} Certification of arbitrary sequential local trajectories,
of crack-path uniqueness or crack-path bounds, of the global optimality of the
numerical trajectory, or of any path certificate on a $\psi$-dependent
admissibility class. Nor is the stability of the returned states assessed. A separate
second-order stability analysis would require the admissible tangent cone of
the active irreversibility constraints rather than the full Hessian, and it
lies outside the present value-level certificates; branch turning, loss of
admissible second-order stability, and loss of global energetic minimality
are distinct events and are not identified with one another here. CSM
certifies the declared admissible problem.

\subsection{Study 4: an assembled composite shell}\label{sec:s4}

\emph{The question.} Does the method survive contact with a production
finite element code? The previous studies used mechanics written for the
purpose. This one takes an existing, assembled, geometrically nonlinear,
anisotropic composite-shell fracture solver and asks three things: whether
the constitutive-set layer operates on it without modifying the shell
equations, whether the constitutive information the structure interrogates
remains small and state-dependent on a class of several hundred dimensions,
and whether the representation of the material --- analytic, spline or learned
--- can be exchanged without touching the mechanics. It makes no other claim.

\subsubsection{Mechanics, constitutive structure and the socket}

The mechanics is a finite-strain Reissner--Mindlin shell with quadratic
displacements and rotations, linear phase fields, a 9-point surface rule and a
5-point through-thickness rule, plane stress enforced by a local Newton
iteration on $E_{33}$ inside the constitutive evaluation, two surface phase
fields $d_u$, $d_l$ and an interfacial field $\beta$ with obstacle
irreversibility, mode-resolved degradation $g_f$, $g_{if}$ and
$g_fg_{if}g_\beta$ on the fibre, matrix and fibre--matrix channels, and a
single variational functional
$\Pi=\Pi_{\rm mech}+\Pi_{\rm fracture}+\Pi_{\rm interface}-W_{\rm ext}$.
The transversely isotropic energy is decomposed into mechanism potentials on
the orthogonal invariants of the fibre and transverse projectors,
\[
  \Psi=\Psi_f(I_4)+\Psi_{mx}(I_T,J_T)+\Psi_{int}(J_{ft})+\Psi_{\rm closed},
\]
with $I_4$ the fibre stretch invariant, $I_T$ the transverse dilatation,
$J_T=|\mathrm{dev}\,E_T|^2$, $J_{ft}=I_5-I_4^2$, and $\Psi_{\rm closed}$
collecting the isotropic matrix term, the compressive branches and the fracture
energies. The three tensile mechanism potentials are the constitutive input
replaced by a set; everything in $\Psi_{\rm closed}$ enters $c(z)$. The socket
therefore acts on the constitutive operator through its invariant
representation, and it replaces those three potentials and nothing else:
whichever representation is selected, the isotropic matrix term, the
compressive branches and the fracture energies remain closed-form.

Two properties of this mechanics determine the form of the CSM layer. The
first is the case anticipated in Section~\ref{sec:affine}: the density is
affine in the mechanism potentials only on the \emph{enlarged} local state
that includes the condensed $E_{33}$; the condensed density is the
minimum over $E_{33}$ of that affine density and is concave, not affine, in
the constitutive variables. The occupation measure is therefore taken on the
enlarged state, with the condensed $E_{33}$ of each quadrature evaluation
exposed by the constitutive evaluation itself. Second, the potentials must be
at least $C^2$ for the tangent, so the constitutive class is a class of
\emph{cubic B-spline coefficient vectors}: 41 coefficients on $[0,I_4^{\max}]$,
a $12\times12$ tensor-product grid on $[0,I_T^{\max}]\times[0,J_T^{\max}]$, and
41 on $[0,J_{ft}^{\max}]$, $n_{\rm raw}=226$. The energy is affine in the
coefficients, convexity and monotonicity are linear constraints on divided
differences at the Greville abscissae, curvature bounds are exact bounds on
second divided differences, and cubic B-splines reproduce the quadratic and
linear reference potentials exactly, so the reference law is an element of the
class. The three representations --- the analytic law, any coefficient vector
of the class, and a learned law --- enter the mechanics through one socket,
selected by a flag whose default leaves the original evaluation unchanged.

The occupation measure is generated by the production assembly: at every
constitutive evaluation the assembly records the enlarged strain, the
degradation factors, the quadrature weight and the invariants, and
$T(z)$ is the degradation-weighted sum of the B-spline basis vectors at the
recorded invariants. No separate strain sampling is performed.

\subsubsection{Verification of the constitutive socket}

With all extensions at their defaults the assembled energy, residual and
tangent are reproduced to exact floating-point equality. With the spline
socket at the reference coefficients they are reproduced to
$4.7\times10^{-16}$, $4.9\times10^{-12}$ and $1.6\times10^{-13}$ relative
error (energy, residual norm, tangent Frobenius norm) on all eight reference
states. The recorded evaluations reconstruct the assembled mechanical energy
to $2.4\times10^{-16}$, and the pairing $\langle\hat c_{\rm ref},T\rangle+c(z)$
reproduces the assembled energy to $4.7\times10^{-16}$: the occupation measure
is an exact by-product of assembly. Central finite differences of the assembled
energy with respect to twenty displacement degrees of freedom of a state
perturbed by $1\,\mu$m from equilibrium agree with the assembled residual to
$2.0\times10^{-8}$ (median $3.5\times10^{-10}$); at equilibrium the residual
itself is below $10^{-10}$~N, beneath the finite-difference floor.

\subsubsection{Reference response}

A single-edge-notched tension specimen, $12\times24$ structured cells (576
triangles, 7100 degrees of freedom), thickness $1$~mm, seeded notch of
$14.4$~mm, fibre orientation $30^\circ$ to the loading direction; matrix
$E_m=70$~GPa, $\nu_m=0.3$, fibre $E_f=200$~GPa, $\nu_f=0.25$,
$\lambda_{mI}=\mu_{mII}=\mu_{LT}=1$~GPa, $G_c=50$~J/m$^2$, $G_{c,int}=10$~J/m$^2$,
$\ell_c=\ell_\beta=5$~mm, residual stiffness $10^{-7}$. Displacement control to
$20\,\mu$m in eight increments; increments are solved by alternate
minimisation with load-step bisection, the fifth reaching $11.25\,\mu$m of
the scheduled $12.5\,\mu$m. For the reference law the reaction rises to a
peak of $183$~N at $11.25\,\mu$m and the crack traverses the specimen before
$15\,\mu$m, where the reaction has fallen to $28$~N (Figure~\ref{fig:s4a});
this response and the phase fields are those of the reference law and are
illustrative. Bending is not exercised by this configuration. The certified increments are $k=2$ (elastic),
$k=4$ (initiation, the first increment at which a node outside the seeded
notch exceeds $d=0.5$), $k=5$ (peak) and $k=8$ (final).

\begin{figure}[t]\centering
\includegraphics[width=\textwidth]{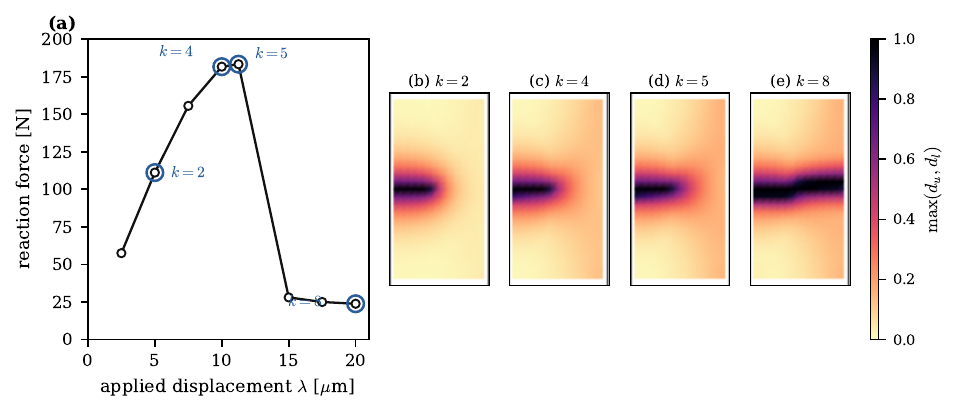}
\caption{Assembled composite shell. (a) Reaction force against applied
displacement for the reference law; the certified increments are ringed.
(b)--(e) Phase field $\max(d_u,d_l)$ on the midsurface at the certified
increments. Both are responses of the reference law and are illustrative;
the certified quantities are the energies of Table~\ref{tab:s4}.}
\label{fig:s4a}
\end{figure}

\subsubsection{Constitutive knowledge}

The set is $\mathcal P=\mathcal P_f\times\mathcal P_{mx}\times\mathcal P_{int}$
with, per mechanism, stress-free normalisation, monotonicity, convexity along
each axis, energy tolerances on every control point alternating between
$3\%$ and $10\%$ about noisy synthetic observations of the reference potential,
and --- for the certified case --- a regularity envelope at the $95\%$ level
inferred from the reference tangent moduli with $4\%$ measurement uncertainty
(curvature bounds on $\Psi_f$ and on $\Psi_{mx}$ in $I_T$, tangent bounds on
$\Psi_{mx}$ in $J_T$ and on $\Psi_{int}$). The reference law is admissible in
both cases (worst constraint violation $4\times10^{-11}$). No cross-mechanism
identity is imposed: the one identity physically available on this material,
between the interaction tangent and the fibre curvature at the origin, does not
bind on the states visited, so the product structure is exact and
$\sigma_{\mathcal P}=\sum_\alpha\sigma_{\mathcal P_\alpha}$ holds to
$1.2\times10^{-7}$ relative at worst, the residual being an optimality gap of
the linear programme rather than a structural defect.

\subsubsection{Certification on the assembled state}

The assembled energy is dominated by the fracture energy of the seeded notch,
$1.887$~J, of which the three mechanism potentials govern between $8$ and
$47\,\mu$J. Certificates are therefore reported on the mechanism-controlled
energy, the only part of $\mathsf W_k$ that depends on the constitutive
input; Table~\ref{tab:s4} and Figure~\ref{fig:s4b}.

\begin{table}[t]\centering\small
\caption{Certification at the selected increments. $\langle\hat c_{\rm ref},T\rangle$
is the mechanism-controlled energy at the reference law, and both bounds are
reported relative to it. The first column of bounds is $U_k-f_k(\psi_{\rm ref})$
after one mechanics solve. At $k=2$ and $k=8$ a second solve was taken and the
value reported is again $U_k-f_k(\psi_{\rm ref})$, the bound that is valid for
any oracle; the bracket at these increments is inverted by the width shown, so
$L_k$ is not a certified lower bound there and is not reported. $n_{\rm vis}$ is
cumulative; $n_{\rm raw}=226$.}
\label{tab:s4}
\begin{tabular}{@{}cccccccc@{}}
\toprule
$k$ & $\lambda$ [$\mu$m] & $\langle\hat c_{\rm ref},T\rangle$ [$\mu$J] & $U_k$, one solve [$\mu$J] & $U_k$, two solves [$\mu$J] & $|U_k-L_k|$ [J] & witnesses & $n_{\rm vis}$ \\
\midrule
2 & 5.00  & 8.45  & 0.282 & 0.2808 & $6\times10^{-11}$ & 1 & 2 \\
4 & 10.00 & 29.07 & 0.932 & --     & --                & 1 & 2 \\
5 & 11.25 & 33.94 & 1.081 & --     & --                & 1 & 3 \\
8 & 20.00 & 46.72 & 1.660 & 1.6392 & $3\times10^{-9}$  & 1 & 4 \\
\bottomrule
\end{tabular}
\end{table}

\begin{figure}[t]\centering
\includegraphics[width=\textwidth]{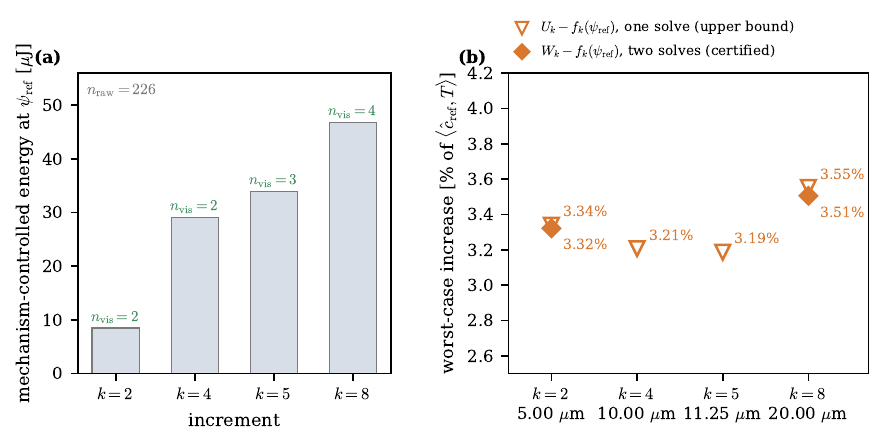}
\caption{Certification on the assembled shell. (a) Mechanism-controlled energy
at the reference law and the cumulative visible dimension against the class
dimension $226$. (b) Worst-case increase of that energy over the admissible
set: the upper bound $U_k$ after one solve and, where taken, after two.}
\label{fig:s4b}
\end{figure}

At every certified increment a single mechanics solve furnishes an upper
bound on the worst-case increase of the mechanism-controlled energy over the
admissible set that lies between $3.2\%$ and $3.6\%$ of that energy. Because
$c(z)$ is constitutively inert and exceeds the mechanism-controlled energy by
five orders of magnitude, the stopping tolerance of the witness loop is scaled
to the mechanism-controlled energy rather than to the total; on the total
energy the tolerance would be met after the first solve at every increment. On
that scaling a second solve at $k=2$ and $k=8$ brings the bracket to
$6\times10^{-11}$~J and $3\times10^{-9}$~J, about $10^{-5}$ of the bounded
quantity, and at both increments the bracket is inverted by that amount: the
local alternate-minimisation oracle returned at the second material a state
marginally better than the bound implied by the first cut, exactly as in
Study~3. Both inversions exceed the optimality gap of the linear programme on
this energy scale (the $1.2\times10^{-7}$ relative gap reported above amounts
to below $10^{-11}$~J) by roughly two to three orders of magnitude, so they
are properties of the oracle and not of the certificate arithmetic. The value
reported after two solves is therefore $U_k$, valid for any oracle by
Theorem~\ref{thm:cert}(i). The dual certificate has one active witness at
every increment; the cumulative visible dimension reaches $4$ of $226$, the
certificate is at all times controlled by the two or three constitutive
directions the shell states interrogate, and this bound concerns the size of
the certificate, not the number of solves.

\subsubsection{Coherence on the assembled state}

At $k=4$ and $k=8$ the coherence gap is evaluated per mechanism
(Figure~\ref{fig:s4c}a). With bounds, convexity and monotonicity only, every gap
is below $10^{-4}\%$: one admissible material is optimal at all $19$--$26$
thousand quadrature atoms simultaneously. With the certified regularity the
fibre potential carries the incoherence, $0.029\%$ at initiation rising to
$0.181\%$ after fracture, the matrix-opening potential contributes
$0.003\%$ after fracture, and the interaction potential none. The phenomenon
of Study~2 is thus evaluated directly on an anisotropic shell state, it is
mechanism-resolved, and it grows with the crack.

\begin{figure}[t]\centering
\includegraphics[width=\textwidth]{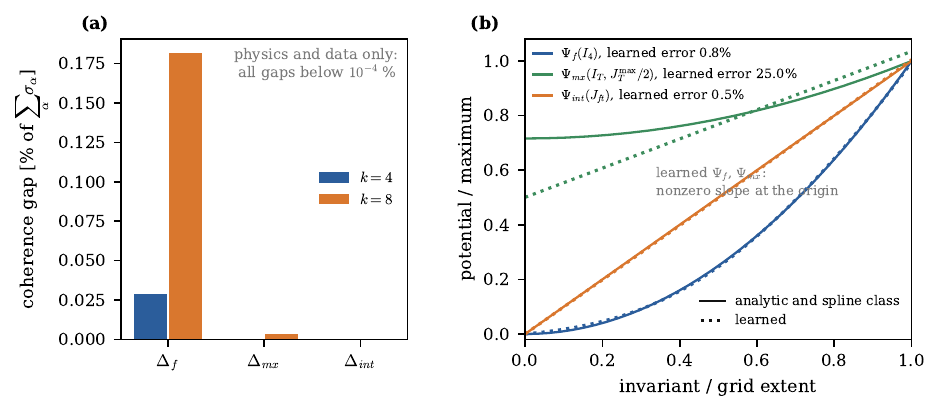}
\caption{Constitutive layer. (a) Mechanism-resolved coherence gap at initiation
and after fracture under the certified regularity; without it all gaps vanish.
(b) The three tensile potentials along their invariants: the analytic law,
which the spline class reproduces to machine precision, and the learned law
with its held-out error.}
\label{fig:s4c}
\end{figure}

\subsubsection{Learned constitutive representation}

One learned representation was placed in the same socket: per mechanism an
input-convex softplus network with eight hidden units, convex and monotone by
construction, fitted to samples of the reference potentials with a held-out
set. The held-out maximum relative errors are $0.79\%$ for $\Psi_f$,
$0.46\%$ for $\Psi_{int}$ and $25\%$ for $\Psi_{mx}$ (the two-dimensional fit;
$5\%$ under a different initialisation). The socket accepted the
representation and evaluated it identically to its training-time definition
($10^{-15}$). The shell, however, did not reach its first increment with it:
the learned $\Psi_f$ and $\Psi_{mx}$ have nonzero derivatives at the origin,
$7.9$~MPa and $10.5$~MPa, so they are not stress-free at zero strain, and the
resulting residual stress perturbs the fracture energy of the seeded notch by
$1.1\times10^{-3}$~J and defeats equilibrium. Projected onto the spline class,
$\Psi_{int}$ lies in $\mathcal P$, whereas $\Psi_f$ and $\Psi_{mx}$ violate its
regularity and tolerance rows by $13\%$ and $23\%$. The result is a
demonstration of the interface and of the admissibility conditions at once:
the representation entered without any change to the shell equations, and the
constitutive knowledge set rejected it for a physical reason ---
stress-free normalisation is a condition on the derivative at the origin,
which a softplus network does not satisfy by construction --- before the
mechanics could be asked to certify anything about it. The message is simple.
The same mechanics can accept an analytic law, a spline representation, or a
learned one; CSM does not decide which representation is used, it decides
which laws are admissible. The network fitted the available data well, and it
violated a basic material requirement, zero stress at zero strain. The
admissible set caught that before the network was used for a structural
prediction, which is the point at which it would otherwise have been caught
by a failed simulation, or not at all.

\subsubsection{Scope}

The study establishes that the assembled nonlinear shell generates the
occupation measure as a by-product of its own assembly; that the witness
certificate operates on it unchanged and is carried by a single equilibrium
state on a constitutive class of dimension $226$; that the interrogated constitutive dimension is
determined by the states the shell visits; that constitutive coherence is
computable and mechanism-resolved on an anisotropic shell state; and that
analytic, spline-coefficient and learned representations enter through one
socket, with admissibility enforced by the set rather than by the mechanics.
It does not establish global optimality of the sequential fracture trajectory,
any certified statement about the crack path, predictive accuracy of the
learned law, or any bending or through-thickness effect.
\subsection{Study 5: constitutive ambiguity revealed by structural loading}\label{sec:s5}

\emph{The question.} A learned constitutive model fits the available data
well. What does the structure know that the fit does not? The preceding
studies establish that the method operates across mechanics of increasing
richness. This one isolates what it adds to a learned constitutive model: a
demonstration that the available evidence, not the representation fitted to
it, determines what a structural prediction is permitted to be, and that the
mechanics determines which part of the remaining ambiguity matters. The
example is chosen to resemble a common testing campaign, in which the axial
and shear responses are characterised well and the coupled multiaxial response
poorly, while the structure loads the material in exactly the coupled region.

\paragraph{Material, data and class.}
The material is a nonlinear anisotropic plane energy $\psi(E_1,E_2,\gamma)$ in
its own axes, with quadratic coupling between the normal strains and quartic
hardening; it plays the role of a hidden truth and is never used by the method.
The constitutive evidence consists of one-mode tests only: seven uniaxial
stress observations along each material axis and seven in shear, each with a
$3\%$ tolerance, and one poor biaxial compression test, four energy
observations with a $25\%$ tolerance in a region of compressive strain. Such
campaigns are common: axial and shear responses are characterised well, the
coupled multiaxial response poorly. The constitutive class is a cubic
B-spline class, $17\times17$ coefficients in $(E_1,E_2)$ on
$[-0.07,0.07]^2$ and $21$ in $|\gamma|$ on $[0,0.08]$, $n_{\rm raw}=310$; the
knowledge set imposes the stress-free origin, convexity by diagonal dominance
of the Hessian and nonnegativity on a $29\times29$ sample grid, the
measurements, and a regularity envelope certified at the $95\%$ level from the
tangent moduli with $4\%$ uncertainty. The reference is admissible.

\paragraph{Learned law.}
An input-convex softplus network with eight hidden units, made stress-free at
the origin by construction (its value and gradient at the origin are subtracted
analytically), is fitted to the same data. Fitted to the data alone it
reproduces the stress observations to $0.7\%$ and violates the certified
regularity fourfold in the region the data do not reach: a good fit that is
not admissible knowledge. Fitted with the regularity and convexity conditions
of the set as penalties it reproduces the observations to $2.8\%$ and lies
within $3\%$ of admissibility; this is the learned law used below.

\paragraph{Structure.}
A plane membrane occupying the unit square with a central hole of radius
$0.22$, $16\times16$ structured triangles with the hole removed, material axes
at $30^\circ$ to the plate axes, symmetry on two edges and prescribed
displacements $\delta$ and $0.6\,\delta$ on the other two, in six increments to
$\delta=0.03$. The states it visits have $E_1\in[-0.008,0.054]$,
$E_2\in[0,0.040]$ and $\gamma\in[-0.046,0.033]$ in the material axes: the
coupled region the one-mode tests leave open, and never the compressive corner
the poor test measured.

\begin{figure}[t]\centering
\includegraphics[width=\textwidth]{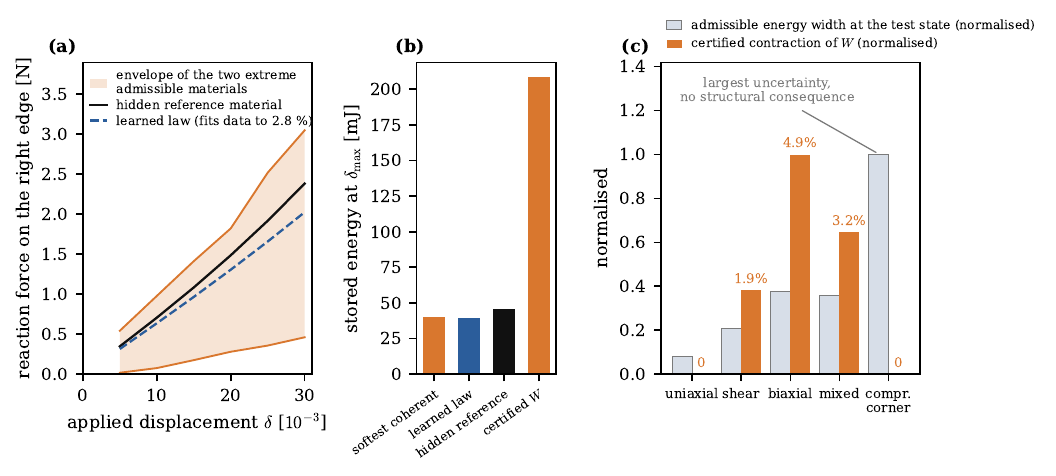}
\caption{Constitutive ambiguity revealed by structural loading. (a) Reaction
force against applied displacement for the hidden reference material, the
learned law, and the two extreme admissible materials whose envelope is
shaded; illustrative. (b) Stored energy at the final increment: softest coherent candidate,
learned law, hidden reference, and the certified upper endpoint. (c) For five
candidate experiments, the width of the admissible energy at the test state
against the realized contraction of the certified upper endpoint when the
experiment is performed on the hidden material; the compression corner carries
the largest constitutive uncertainty and no structural consequence.}
\label{fig:s5}
\end{figure}

\paragraph{Results.}
The learned law predicts a reaction of $2.02$~N at the final increment against
the reference $2.38$~N, a plausible single prediction. Over the admissible
set, the witness loop certifies the upper endpoint of the stored energy at the
final increment in two solves with one active witness, $\mathsf W=0.209$~J
against $0.046$~J for the reference, and extreme-point descent on the concave
value function reaches a softest coherent candidate at $0.040$~J in three
solves. The certified worst-case energy thus exceeds the reference more than
fourfold, and the learned law's stored energy, $0.0396$~J, lies below the
smallest coherent admissible value found: the residual inadmissibility of the
fit places it outside the interval the data support (Figure~\ref{fig:s5}b).
As an illustration of what the two extreme materials mean structurally, their
reaction curves enclose those of the reference and of the learned law and span
$[0.46,3.05]$~N at the final increment (Figure~\ref{fig:s5}a); these curves
are illustrations, not certified bounds on the force. The upper certificate is carried by a
single equilibrium state of a class of dimension $310$.

The result on experiment selection can be stated before the machinery. Five
candidate tests are available. The compression-corner test probes the region
where the admissible laws disagree most, by a wide margin; it is where the
constitutive uncertainty is largest. Yet the structure is almost insensitive
to it, and performing that test changes the certified prediction by
essentially nothing. The biaxial and mixed tests probe a smaller constitutive
uncertainty, in the coupled region the membrane actually enters, and produce
a much larger change in the structurally relevant prediction. The largest
constitutive uncertainty is not the most important uncertainty for this
structure, and it is the mechanics, not the size of the uncertainty, that
tells the two apart.

\begin{table}[t]\centering\footnotesize
\caption{Candidate experiments at the final increment: admissible width of the
energy at the test state, discrimination between the two extreme materials,
and the realized contraction of the certified upper endpoint when the
experiment is performed on the hidden material with a $3\%$ tolerance, as a
fraction of $\mathsf W-f(\psi_{\rm ref})$. The contraction is realized for the
one outcome that material produces, not guaranteed over outcomes.}
\label{tab:s5}
\begin{tabular}{@{}lccc@{}}
\toprule
experiment (material axes) & admissible width [J] & discrimination [J] & realized contraction \\
\midrule
uniaxial, $E_1=0.04$ & $0.013$ & $0.006$ & $0$ \\
pure shear, $\gamma=0.05$ & $0.033$ & $0.033$ & $1.9\%$ \\
biaxial, $E_1=E_2=0.02$ & $0.060$ & $0.056$ & $4.9\%$ \\
mixed, $E_1=0.025$, $E_2=0.012$, $\gamma=0.02$ & $0.058$ & $0.052$ & $3.2\%$ \\
compression corner, $E_1=E_2=-0.05$ & $0.160$ & $0.033$ & $0$ \\
\bottomrule
\end{tabular}
\end{table}

Five candidate experiments are then valued by performing each on the hidden
material and recomputing the certified upper endpoint on the contracted set,
with the cuts of the original certificate carried and the witness loop
re-projected before its first solve (Table~\ref{tab:s5}, Figure~\ref{fig:s5}c).
The uniaxial test, already well represented in the data, contracts nothing.
The compression-corner test has by far the largest admissible width --- it is
where the constitutive uncertainty is largest --- and contracts the certified
endpoint by $2\times10^{-12}$~J: that uncertainty is mechanically invisible to
this structure. The biaxial and mixed tests, which probe the coupled region the
structure actually visits, contract the endpoint most. A criterion that
measures where the constitutive uncertainty is largest selects the corner; the
mechanics selects the coupling. The discrimination between the two extreme
materials, a necessary condition for guaranteed value
(Theorem~\ref{thm:gain}), is nonzero at the corner because the extreme
materials are arbitrary there: it is the realized contraction of the
certified endpoint, not the discrimination, that decides. Each contraction is
realized for the outcome the hidden material produces; the guaranteed
contraction $G_{\Rint}$ over all outcomes would require the lower endpoint on
each intersection, which this study does not certify.

\paragraph{Scope.}
The certified quantity is the upper endpoint of the stored energy. The lower
endpoint is a candidate from extreme-point descent, the reaction curves are
illustrations of the extreme materials and not certified bounds on the force,
and the learned law is one network trained once under stated conditions.

\section{Conclusions}\label{sec:concl}

\emph{What is CSM?} A mechanics framework that carries constitutive
uncertainty as an admissible set of laws instead of selecting one law. The
set is what the evidence permits: thermodynamics, measurements within their
tolerances, certified regularity and known material structure. The mechanics
is unchanged, and no member of the set is ever selected. The replacement requires no modification of the underlying mechanics,
because assembly already makes the incremental energy affine in the
constitutive function through the strain-occupation measure of the state.
That identification is the whole of the method: the energy of a state under
the set is a support function on that measure, the occupation measure of an
equilibrium state is a supergradient of the value function, and the mechanics
solve one performs anyway is the oracle that resolves constitutive
uncertainty.

\emph{What does it add?} It tells us what the structure can certify about
that uncertainty. The same layer was carried without modification from a
truss to a data-constrained nonlinear bar, through a propagating phase-field
crack, onto an assembled finite-strain composite shell, and finally to a
membrane characterised by one-mode tests alone; what changed from study to
study was the mechanics, and the constitutive-set machinery did not.

\emph{What is mechanically visible, and what was learned?} Only the
constitutive directions interrogated by the states the structure generates.
Three things were learned. Certification is mechanical: equilibrium states
are the cuts, the certificate is a mixture of at most $n_{\rm vis}+1$ of
them, and the number of constitutive directions a structure interrogates is
small and set by the states it visits, four of $226$ on the shell and a single
witness on a class of $310$. The upper certificate survives any oracle; the
lower side is a solver-realized value wherever global minimality is not
established, and the inversion of the two proved to be a working detector of
lost globality in a local solver. Coherence is a matter of physics rather than of an assumed budget: the
pointwise-worst material is admissible exactly when the pointwise envelope is
itself an admissible law, closure of the knowledge set under pointwise maximum
being sufficient for that, and it is a certified curvature bound, not a
Lipschitz bound, that removes it. Information is decided by the mechanics: the one-sided guarantee is
vacuous, the guaranteed contraction lives on the response interval, and, for
the outcome observed, the experiment with the largest structural consequence
was not the one where the constitutive uncertainty was largest.

\emph{What does an experiment do?} It contracts the admissible set, and the
mechanics says which contraction matters. The worst-case prediction itself
cannot be guaranteed to fall, because the worst admissible material may be
the true one; the range of predictions can, when the experiment separates the
two extreme admissible materials by more than its tolerance.

\emph{What is the long-term goal?} To use mechanics not only to predict, but
to decide what constitutive information should be acquired next: constitutive
knowledge enters the mechanics, the mechanics identifies the uncertainty that
is relevant to the structure, and that identification directs the next
measurement. The certificates are value-level statements on a declared
incremental problem, and the path-level statements hold for trajectory classes
that do not depend on the material. Two directions follow. Where a constitutive law
selects among competing solution branches, the map from law to branch,
$\psi\mapsto\Gamma(\psi)$, becomes the object to certify; it is neither
concave nor single-valued, and its certification is the open problem for
path-dependent constitutive systems. It is open for a definite reason: once
the material can select the branch or the active set of irreversibility
constraints, the trajectory class ceases to be independent of $\psi$, the
affine pairing of Section~\ref{sec:affine} no longer extends to the path, and
the
stability of the states themselves would have to be assessed on the
admissible tangent cone rather than on the full Hessian. Globally certified sequential
trajectories, and certification of solution fields and crack paths, wait on
that extension. And the quantities engineers act on, a
residual strength or a margin at a design load, are functionals of the
solution rather than values of the energy; extending the certificate from the
value to such outputs, with the same witnesses, is the step that would make
the framework a design tool.

\section*{CRediT authorship contribution statement}
\textbf{Altaf Ahmad Lone:} Writing -- original draft, Visualization,
Validation, Software, Methodology, Investigation, Formal analysis, Data
curation, Conceptualization.
\textbf{R.~T. Durai Prabhakaran:} Writing -- review \& editing, Supervision,
Resources, Project administration.
\textbf{Tawqeer Nasir Tak:} Writing -- review \& editing, Validation,
Methodology.
\textbf{Timon Rabczuk:} Writing -- review \& editing, Supervision,
Methodology, Conceptualization.
\textbf{Xiaoying Zhuang:} Writing -- review \& editing, Supervision,
Methodology.

\section*{Declaration of competing interest}
The authors declare that they have no known competing financial interests or
personal relationships that could have appeared to influence the work reported
in this paper.

\section*{Acknowledgements}
The first author gratefully acknowledges the support of the Indian Institute
of Technology Jammu. The authors thank the Department of Mechanical
Engineering, IIT Jammu, for access to computational resources.

\section*{Data availability}
All quantities reported in Section~\ref{sec:studies} are computed
with fixed seeds by numerical code implementing the interface of
Sections~\ref{sec:seteng} and~\ref{sec:cert} against the respective mechanics
oracle; the linear programmes use primal and dual feasibility tolerances of
$10^{-10}$, and the conic and quadratic subproblems of Study~3 an interior-point
solver. The code and the data supporting the findings of this study are
available from the corresponding author on request.

\end{document}